\newcommand{\e}{\mathrm{e}}
\newcommand{\bM}{\mathbf{M}}
\newcommand{\bQ}{\mathbf{Q}}
\newcommand{\be}{\mathbf{e}}
\newcommand{\bz}{\mathbf{z}}
\newcommand{\bw}{\mathbf{w}}
\newcommand{\bv}{\mathbf{v}}
\newcommand{\bu}{\mathbf{u}}
\newcommand{\bp}{\mathbf{p}}
\newcommand{\bq}{\mathbf{q}}
\newcommand{\bi}{\mathbf{i}}
\newcommand{\bx}{\mathbf{x}}
\newcommand{\bone}{\mathbf{1}}
\newcommand{\bF}{\mathbf{F}}
\newcommand{\rp}{\mathrm{p}}
\newcommand{\rd}{\mathrm{d}}
\newcommand{\A}{\ensuremath{\mathbb{A}}\xspace}
\newcommand{\B}{\ensuremath{\mathbb{B}}\xspace}
\newcommand{\tsp}{\mathsf{s}}
\newcommand{\razaoP}{\frac{P_{\overrightarrow{C}}}{P_{\overleftarrow{C}}}}
\newcommand{\R}{\mathbb{R}}
\newcommand{\ini}{\mathrm{I}}
\newcommand{\simplex}{\Delta}
\newcommand{\bydef}{:=}
\newtheorem{lemma}{Lemma}
\newtheorem{cor}{Corollary}
\newtheorem{remark}{Remark}
\author[1]{Diogo Costa-Cabanas}
\author[1,2]{Fabio A. C. C. Chalub\footnote{Corresponding author:\texttt{facc@fct.unl.pt}}}
\author[3]{Max O. Souza}
\affil[1]{Departamento de Matem\'atica, Faculdade de Ci\^encias e Tecnologia, Universidade Nova de Lisboa, Quinta da Torre, 2829-516 Caparica, Portugal}
\affil[2]{Centro de Matem\'atica e Aplica\c c\~oes, Faculdade de Ci\^encias e Tecnologia, Universidade Nova de Lisboa, Quinta da Torre, 2829-516 Caparica, Portugal}
\affil[3]{Instituto de Matem\'atica e Estatística, Universidade Federal Fluminense, Rua Prof. Marcos Waldemar de Freitas Reis, S/N, Campus do Gragoat\'a, Niter\'oi, RJ 24210-201, Brazil}
\title{Entropy and the arrow of time in population dynamics}
\date{\today}
\begin{document}

\maketitle

\begin{abstract}
The concept of entropy in statistical physics is related to the existence of irreversible macroscopic processes.  In this work, we explore  a recently introduced entropy formula for a class of stochastic processes with more than one absorbing state that is extensively used in population genetics models. We will consider the Moran process as a paradigm for this class, and will extend our discussion to other models outside this class. We will also discuss the relation between non-extensive entropies in physics and epistasis (i.e., when the effects of different alleles are not independent) and the role of symmetries in population genetic models.

\end{abstract}

\textbf{Keywords:} Entropy, Moran process, time-irreversible processes, epistasis, fundamental symmetries.

\textbf{PhySH:} Evolutionary dynamics, Population dynamics 
\section{Introduction}

\subsection{Background}

Time reversal symmetry is one of the fundamental symmetry in physical laws; as a simple, but clear example, second Newton's law is preserved by time reversal $t\mapsto -t$. In classical Hamiltonian mechanics, the dynamics of a given system is encoded in a function $H(p,q,t)$, where $p$ and $q$ are the generalised position and momentum, respectively, of the corresponding particles and the time $t$ --- this function is called the Hamiltonian of the system. If the time dependence is not explicit, i.e., if $H(p,q,t)=H(p,q)$, then it corresponds to the total energy of the system. In the latter case, it is often assumed that  $H(p,q)=H(p,-q)$ and, therefore $(q(t),p(t))$ solves Hamilton's equations if and only if $(q(-t),-p(-t))$ is also a solution.

However, at human scale (i.e., when considering the number of interacting agents in a given system at the order of $10^{23}$ particles), physics is full of irreversible phenomena. A clear example of this situation happens when a drop of ink dissolves in a bucket of water. Nothing prevents a spontaneous concentration of ink from a previously homogeneous mixture; however, these phenomena are expected to happen --- even with almost negligible probability -- after an interval of time larger than the age of the universe.

The irreversibility of a certain class of physical phenomena is natural only in the realm of statistical mechanics --- the area of physics that deals with a large number of interacting constituents. One of the aims of the present work is to use techniques from statistical mechanics to understand irreversible phenomena in models used in population genetics.

Population genetics has no equivalent to the second Newton's law. Furthermore, most models based on microscopic descriptions of a population (e.g., the Moran, and the Wright-Fisher processes, Individual-based dynamics, to name a few models used in the study of biological evolution) are first-order in time, and therefore it does not possess the symmetry $t \mapsto -t$. Therefore, it is not unexpected that population based mathematical models do not present, generally, the time-reversal symmetry. Examples of such models are the replicator dynamics~\cite{HofbauerSigmund}, the Kimura equation~\cite{ChalubSouza_JMB2014}, the canonical equation of adaptive dynamics~\cite{Champagnat_2002} etc.

On the other hand, both classical particle physics and population genetics starts with the description of the dynamics at individual level. However,  the relevant features are measured in completely different scales -- in fact, at population level, although this expression is not used in physics. 

In this work, we propose to explore one of the central concepts of the micro-macro asymmetry in physics, the \emph{entropy}, in the context of evolutionary dynamics. We will use the term as close as possible to its meaning in statistical mechanics.

The concept of entropy was introduced in physics within the framework of the study of efficiency in thermal machines; later on, Boltzmann reinterpreted this concept as a measure of the number of microstates consistent with a given macroscopic state of a system with a large number of degrees of freedom. The understanding of entropy and the associated second law of thermodynamics is fundamental to understand the asymmetry between past and future --- the so called \emph{arrow of time}, cf.~\cite{Feynman_1,Lebowitz_1999}.

The implications of the concept of entropy went far beyond physics; in a further development, Shannon extended Boltzmann ideas to what is now called \emph{information theory}~\cite{Shannon_Weaver}. Currently, many reinterpretations of this concept are studied in general biology, including~\cite{wiley1982victims,brooks1984evolution,collier1986entropy,wicken1983entropy}.

Here, we are concerned with understanding  non-equilibrium dynamics, and, in particular,  irreversibility, in population dynamics. More precisely, we start this work by considering a population of interacting individuals, in which individuals are replaced by newborns over time, according to certain dynamics. These newborns inherit the characteristics from their parents.

In population dynamics, as discussed above, we are not concerned with the precise characterization of individuals, but with macroscopic descriptions, i.e., descriptions at population level. In particular, we study how allele frequency --- the fraction of the population that shares a given allele --- varies over time when the population evolves according to certain rules defined at individual level.

We claim that the entropies introduced in~\cite{CMRS:21} are the relevant quantities to characterize the irreversibility feature of evolutionary dynamics. More precisely, they form a class of functions obtained naturally from the mathematical theory of gradient flows and optimal transport; therefore, they are not only monotonic in time, but they increase optimally.  We will proceed with a detailed study of this concept for populations evolving according to the Moran process and discuss the application of the concept in models in which the mathematical theory does not apply directly.

\subsection{Outline}

In Section~\ref{sec:Moran}, we will discuss the basics of the Moran process (Subsection~\ref{ssec:basics}) and show that given the fitnesses of different types in a population, there is natural \emph{arrow of time} between different possible states (Subsection~\ref{ssec:irreversibility}). 

The properties of the entropy will be explored in  Section~\ref{sec:entropy}. In Subsection~\ref{ssec:decay}, we show that it is a monotonic decreasing function, stationary if and only if the population state is a linear combination of stationary and quasi-stationary states. We show, in the sequel (Subsection~\ref{ssec:asymptotic}) that the entropy decays exponentially, with the decay rate given by twice the second spectral gap of the Moran transition matrix, and linear coefficient directly related to the third eigenvector of the same matrix. In the final subsection, we discuss the relationship between different entropies, particularly the discussion between additive and subadditive entropies, and coevolution of different loci.

In Section~\ref{sec:beyond}, we go beyond the mathematical results discussed so far, and apply the theory, rigorously developed for the Moran process, to the more realistic Wright-Fisher process (Subsection~\ref{ssec:WF}), discuss a curious relation between minimum initial entropy, and minimum entropy in the long run (Subsection~\ref{ssec:eigenvector}) and in Subsection~\ref{sec:symmetries}, we speculate on the role of symmetries in population dynamics, discussing, in particular, why the replicator equation presents additional symmetries with respect to the models from which it is derived. Finally, we present some conclusions and speculate on possible biological applications to be addressed in the near future using the concepts discussed in this work.

\section{The Moran process}
\label{sec:Moran}

\subsection{Basic setup and notation}
\label{ssec:basics}

Consider a population of two-type individuals, \A and \B. We define the \emph{type selection vector} $(\tsp_0,\tsp_1,\dots,\tsp_N)$ such that $\tsp_i$ is the probability to select an individual of type \A in a population of fixed size $N$ at state $i$. The \emph{state} of the population at time $t$ is the number of individuals of focal type $i$ present at time $t$.

The Moran process is the stochastic process defined in such a way that a population at time $t+\Delta t$, $\Delta t>0$ fixed, is built from the population at state $i$ at time $t$ in two steps: i) with equal probability an individual is selected at random to be eliminated and ii) with probability $\tsp_i$ ($1-\tsp_i$, respect.) and individual of type \A (\B, respect.) is selected to reproduce. Since there is no mutation, it is clear that $\tsp_0=1-\tsp_N=0$.

The transition matrix of the Moran process is given by $\bM\bydef\left(M_{ij}\right)_{i,j=0,\dots,N}$, with $M_{ij}=0$ for $|i-j|>1$, $M_{i-1,i}=\frac{i}{N}(1-\tsp_i)$, $M_{i+1,i}=\frac{N-i}{N}\tsp_i$, $M_{ii}=1-M_{i+1,i}-M_{i-1,i}$. The state vector of  the population at time $t$ is given by $\bp(t)=(p_0(t),p_1(t),\dots,p_N(t))$, where $p_i(t)$ indicates the probability to find the population in state $i$ at time $t$. The evolution is given by $\bp(t+\Delta t)=\bM\bp(t)$, $\bp(0)=\bp^\ini\in\simplex^{N}\bydef\{\bx|x_i\ge 0, \sum_{i=0}^{N}x_i=1\}$ (the $N$-dimensional simplex). 

 Let $\widetilde{\bM}=\left(M_{ij}\right)_{i,j=1,\dots,N-1}$ be the \emph{core} matrix associated to the Moran process. We write $\langle\cdot,\cdot,\rangle_N$ for the Euclidean inner product in $\R^{N}$. We consider the evolution given by $\bp_{i+1}=\bM\bp_{i}$, $\bp_0=\bp^{\ini}\in\simplex^{N}$, and call $\widetilde{\bp}\in\R_+^{N-1}$ such that $\bp=p_0\oplus\widetilde{\bp}\oplus p_N$.

The following Lemma collects several  results from \cite{CMRS:21} that will be useful in the sequel:

\begin{lemma}\label{lem:PFextended}
Consider the Moran core matrix $\widetilde{\bM}$. Then, there are two bases of the space $\R^{N-1}$, $\left(\widetilde{\bu}^{(i)}\right)_{i=1,\dots,N-1}$ and $\left(\widetilde{\bv}^{(i)}\right)_{i=1,\dots,N-1}$ such that
\begin{enumerate}
\item\label{lem:PFextended:eigenvectors}  $\widetilde{\bM}\widetilde{\bv}^{(i)}=\mu_i\widetilde{\bv}^{(i)}$, $\widetilde{\bu}^{(i)}\widetilde{\bM}=\mu_i\widetilde{\bu}^{(i)}$, with $\mu_1>\mu_2\ge\dots\ge\mu_{N_0}$, where $\mu_i\in\R$ for all $i$.
\item\label{lem:PFextended:dominant} $u^{(i)}_j>0$ for all $j=1,\dots,N-1$ if and only if $i=1$. The same is true for $\widetilde{\bv}^{(i)}$. Furthermore, $\sum_{j=1}^{N-1}v^{(1)}_j=\langle\widetilde{\bu}^{(i)},\widetilde{\bv}^{(i)}\rangle=1$, $i=1,\dots, N-1$.
\item\label{lem:PFextended:extension} Each vector $\widetilde{\bu}^{(i)}$ and $\widetilde{\bv}^{(i)}$ can be extended to vectors in $\R^{N+1}$, $\bu^{(i)}\bydef u^{(i)}_0\oplus\widetilde{\bu}^{(i)}\oplus u^{(i)}_N$, with $u^{(i)}_0=u^{(i)}_N=0$ and $\bv^{(i)}\bydef v^{(i)}_0\oplus\widetilde{\bv}^{(i)}\oplus v^{(i)}_N$, respectively, such that $\bu^{(i)}\bM=\mu_i\bu^{(i)}$ and $\bM\bv^{(i)}=\mu_i\bv^{(i)}$. 
\item\label{lem:PFextended:ortogonality} $\langle\widetilde{\bu}^{(i)},\widetilde{\bv}^{(j)}\rangle=0$ for $\mu_i\ne\mu_j$ and $\sum_{j=0}^Nv^{(i)}_j=0$ for $i\ge 2$.
\item\label{lem:PFextended:reversibility} $\widetilde{\bu}^{(k)}_iM_{ij}\widetilde{\bv}^{(k)}_j=\widetilde{\bu}^{(k)}_jM_{ji}\widetilde{\bv}^{(k)}_i$.
\end{enumerate}
\end{lemma}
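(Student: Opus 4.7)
The plan is to derive every item from the fact that $\widetilde{\bM}$ is a substochastic birth--death matrix with strictly positive sub- and super-diagonals on the interior (whenever $0<\tsp_i<1$). First I would build a detailed-balance measure $\pi$ by the classical birth--death recursion $\pi_1=1$, $\pi_{i+1}=\pi_i M_{i+1,i}/M_{i,i+1}$, which yields $\pi_i M_{ij}=\pi_j M_{ji}$ for $1\le i,j\le N-1$. Setting $D=\mathrm{diag}(\pi_1,\dots,\pi_{N-1})$, the conjugate $D^{1/2}\widetilde{\bM}D^{-1/2}$ is symmetric, so the spectral theorem gives a real spectrum and an orthonormal basis of eigenvectors; pulling back by $D^{\pm 1/2}$ produces matched left and right eigenvectors with $\widetilde{u}^{(k)}_i\propto \pi_i\,\widetilde{v}^{(k)}_i$. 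Combined with Perron--Frobenius for the irreducible primitive matrix $\widetilde{\bM}$ (positive diagonal ensures aperiodicity), this immediately yields items \ref{lem:PFextended:eigenvectors} and \ref{lem:PFextended:dominant}: the top eigenvalue $\mu_1$ is simple and strictly dominant, its eigenvectors $\widetilde{\bu}^{(1)},\widetilde{\bv}^{(1)}$ are strictly positive, and by biorthogonality to $\widetilde{\bu}^{(1)}$ no other right eigenvector can be non-negative. Item \ref{lem:PFextended:reversibility} falls out immediately: $\widetilde{u}^{(k)}_i M_{ij}\widetilde{v}^{(k)}_j$ is proportional to $\pi_i M_{ij}\,\widetilde{v}^{(k)}_i\widetilde{v}^{(k)}_j$, which is symmetric in $i,j$ by reversibility.

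Item \ref{lem:PFextended:extension} is the conceptually new step, handled by a row-by-row check. The only entries of $\bM$ outside the core block that matter for the eigenvalue equation are $M_{01}$ (the only nonzero interior entry of row $0$) and $M_{N,N-1}$ (row $N$); columns $0$ and $N$ are $\be_0$ and $\be_N$. For the left extension, padding $\widetilde{\bu}^{(i)}$ by zeros works: columns $0$ and $N$ contribute nothing to any interior equation, and the boundary equations reduce to $0=0$. For the right extension, zero-padding fails; the row-$0$ equation reads $v^{(i)}_0+M_{01}\widetilde{v}^{(i)}_1=\mu_i v^{(i)}_0$, forcing
\[
v^{(i)}_0=\frac{M_{01}\widetilde{v}^{(i)}_1}{\mu_i-1},\qquad v^{(i)}_N=\frac{M_{N,N-1}\widetilde{v}^{(i)}_{N-1}}{\mu_i-1}.
\]
These are well-defined because $\widetilde{\bM}$ is strictly substochastic on columns $1$ and $N-1$, giving $\mu_i<1$ for every $i$.

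The remaining item \ref{lem:PFextended:ortogonality} is routine. Biorthogonality for distinct eigenvalues follows from $\mu_i\langle\widetilde{\bu}^{(i)},\widetilde{\bv}^{(j)}\rangle=\langle\widetilde{\bu}^{(i)}\widetilde{\bM},\widetilde{\bv}^{(j)}\rangle=\mu_j\langle\widetilde{\bu}^{(i)},\widetilde{\bv}^{(j)}\rangle$. The identity $\sum_{j=0}^{N}v^{(i)}_j=0$ for $i\ge 2$ comes from column-stochasticity of $\bM$, which makes $\bone$ a left eigenvector of $\bM$ with eigenvalue $1$; hence $(1-\mu_i)\,\bone\cdot\bv^{(i)}=0$, and $\mu_i<1$ gives the conclusion. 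The $i=1$ case is consistent with the normalization $\sum_{j=1}^{N-1}\widetilde{v}^{(1)}_j=1$ once one checks $v^{(1)}_0+v^{(1)}_N=-1$, which follows from summing the core eigenvalue equation and reading off the column-stochasticity defect at columns $1$ and $N-1$.

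The main obstacle is item \ref{lem:PFextended:extension}: recognizing that the right eigenvectors require a nontrivial, generally nonzero, boundary extension, in contrast to the zero-padding admissible for the left eigenvectors. Missing this asymmetry would break $\bM\bv^{(i)}=\mu_i\bv^{(i)}$ at the absorbing rows and is exactly what makes $\bv^{(1)}$ fail to be a probability vector on $\simplex^N$ even though $\widetilde{\bv}^{(1)}$ is one on the interior. All other steps are standard spectral theory of reversible birth--death chains.
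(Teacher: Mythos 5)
Your proposal is correct, and it is more self-contained than the paper's proof, which relies on black boxes in two places. For items~\ref{lem:PFextended:eigenvectors}, \ref{lem:PFextended:dominant} and \ref{lem:PFextended:reversibility} the paper simply invokes the spectral theorem for tridiagonal matrices plus Perron--Frobenius, and then proves the microreversibility identity of item~\ref{lem:PFextended:reversibility} separately by induction along the chain. You instead build the detailed-balance measure $\pi$ explicitly and conjugate $\widetilde{\bM}$ to a symmetric matrix; this delivers in one stroke the real spectrum, the matched biorthogonal bases with $\widetilde{u}^{(k)}_i\propto\pi_i\widetilde{v}^{(k)}_i$, the positivity of $\langle\widetilde{\bu}^{(i)},\widetilde{\bv}^{(i)}\rangle$ needed to justify the normalization in item~\ref{lem:PFextended:dominant}, and item~\ref{lem:PFextended:reversibility} in one line, since $\widetilde{u}^{(k)}_iM_{ij}\widetilde{v}^{(k)}_j\propto\pi_iM_{ij}\widetilde{v}^{(k)}_i\widetilde{v}^{(k)}_j$ is symmetric in $i,j$ by reversibility --- arguably the cleaner derivation, and one that makes explicit a structural fact the paper leaves implicit. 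For item~\ref{lem:PFextended:extension}, which the paper dismisses as ``straightforward calculations,'' your row-by-row analysis is exactly right and captures the asymmetry encoded in the statement: the left eigenvectors extend by zero padding because $M_{10}=M_{N-1,N}=0$ and columns $0,N$ of $\bM$ are $\be_0,\be_N$, whereas the right eigenvectors require the nonzero boundary values $v^{(i)}_0=M_{01}\widetilde{v}^{(i)}_1/(\mu_i-1)$ and $v^{(i)}_N=M_{N,N-1}\widetilde{v}^{(i)}_{N-1}/(\mu_i-1)$, well defined because $\mu_i\le\mu_1<1$ (the core is irreducible with columns $1$ and $N-1$ strictly substochastic). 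Item~\ref{lem:PFextended:ortogonality} is handled identically to the paper, and your observation that $v^{(1)}_0+v^{(1)}_N=-1$ reconciles the normalization $\sum_{j=1}^{N-1}v^{(1)}_j=1$ with column stochasticity is a consistency check the paper omits. The only caveat is that the whole construction (irreducibility, primitivity, strict positivity of the off-diagonals) requires $0<\tsp_i<1$ for $0<i<N$, which you correctly flag at the outset and which is the paper's implicit standing assumption.
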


\begin{proof}
Items~\ref{lem:PFextended:eigenvectors} and~\ref{lem:PFextended:dominant} follow from the spectral theorem for tridiagonal matrices, the Perron-Frobenius theorem for primitive matrices and a normalization choice. Item~\ref{lem:PFextended:extension} follows from straightforward calculations. For~\ref{lem:PFextended:ortogonality}, note that $\mu_i\langle\bu^{(i)},\bv^{(j)}\rangle=\langle\widetilde{\bM}^\dagger\bu^{(i)},\bv^{(j)}\rangle=\langle\bu^{(i)},\widetilde{\bM}\bv^{(j)}\rangle=\mu_j\langle\bu^{(i)},\bv^{(j)}\rangle$, and the orthogonality relation follows; the last equation follows noting that $\bM^\dagger\bone=\bone$, where $\bone_i=1$ for $i=0,\dots,N$. Finally, note that $M_{ij}=0$ if $|i-j|>1$ and it is necessary to prove item~\ref{lem:PFextended:reversibility} only if $i=j\pm1$. For $i=1$, $j=2$, $u^{(k)}_1M_{11}v^{(k)}_1+u^{(k)}_2M_{21}v^{(k)}_1=\mu_ku^{(k)}_1v^{(k)}_1=u^{(k)}_1M_{11}v^{(k)}_1+u^{(k)}_1M_{12}v^{(k)}_2$, and the same for $i=2$, $j=1$. Using the induction principle, and 
\begin{align*}
&u^{(k)}_{j-1}M_{j-1,j}v^{(k)}_j+u^{(k)}_jM_{jj}v^{(k)}_j+u^{(k)}_{j+1}M_{j+1,j}v^{(k)}_j=\mu_ku^{(k)}_jv^{(k)}_j\\
&\quad=u^{(k)}_jM_{j,j-1}v^{(k)}_{j-1}+u^{(k)}_jM_{jj}v^{(k)}_j+u^{(k)}_jM_{j,j+1}v^{(k)}_{j+1}\ ,
\end{align*}
we finish the proof of the last item.
\end{proof}

\begin{remark}
It is not usual to define the Moran process from the type selection probability vector, but rather from the fitnesses functions $\Psi^{(\A),(\B)}:\{0,1,\dots,N\}\to\R_+$, cf.~\cite{Nowak_EvolutionaryDynamics}. Assuming the fitness functions as proxies of the probability to select type \A for reproduction when the population is at state $i$, it is customary to assume that $\tsp_i=i\Psi^{(\A)}(i)/(i\Psi^{(\A)}(i)+(N-i)\Psi^{(\B)}(i))$.
\end{remark}

\begin{remark}
 Inspired by the terminology used in the Markov Chain literature, and indeed also in~\cite{Maas_JFA}, the condition given in Lemma~\ref{lem:PFextended}.\ref{lem:PFextended:reversibility} was called \emph{microreversibility} in~\cite{CMRS:21}; however this condition is not directly related with the concept of time-reversibility used in this work. It is closely related to the concept of \emph{adiabatic}, or \emph{quasi-stationary} process, as the central idea is that each step in the Markov process is an equilibrium state. Therefore, it is no surprise that it is satisfied by general birth-and-death processes, but not by the Wright-Fisher process (to be discussed later on), when the full population is replaced in a single step.
\end{remark}

\subsection{Reversibility and irreversibility in the Moran process}
\label{ssec:irreversibility}

The evolution of a population in the Moran process is a succession of states, from a given initial condition until one of the two absorbing, final states. Here we show that certain paths are more likely than reverse  paths. Therefore, it is possible to infer, from a sequence of states, if the order corresponds to the reality or to a backward \emph{film} being presented, even if metastable interior states are present.

More precisely,
\begin{lemma}
	\label{lem:prob_paths}
Let $0<i,j<N$ and let $\overrightarrow{C}=\left(i=x_0,x_1,\dots,x_n=j\right)$ be a certain path from $i$ to $j$ (note that the path does not need to be monotonic) and $\overleftarrow{C}=\left(j=x_n,x_{n-1},\dots,x_0=i\right)$ the reverse path. The ratio between the probabilities that a stochastic process follows the path $\overrightarrow{C}$ and $\overleftarrow{C}$ is given by
\[
\razaoP=\frac{u^{(1)}_i/v^{(1)}_i}{u^{(1)}_j/v^{(1)}_j}\ .
\]
\end{lemma}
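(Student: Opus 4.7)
The plan is to expand the path probabilities as products of one-step transition probabilities, apply the microreversibility identity from Lemma~\ref{lem:PFextended}.\ref{lem:PFextended:reversibility} term by term, and observe that the resulting product telescopes to leave only the endpoints.

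First I would write the probability of following the forward path as $P_{\overrightarrow{C}}=\prod_{k=0}^{n-1} M_{x_{k+1},x_k}$ and of the reverse path as $P_{\overleftarrow{C}}=\prod_{k=0}^{n-1} M_{x_k,x_{k+1}}$, using the convention $\bp(t+\Delta t)=\bM\bp(t)$ that makes $M_{ij}$ the probability of the transition $j\to i$. Since all interior states $0<x_k<N$ are visited, none of the factors vanish (for paths which are actually possible in the Moran process the consecutive states differ by at most one, and one can also drop trivially any null factor if a $k$-th step is forbidden since then both paths have probability zero). The quantity of interest becomes
\[
\razaoP \;=\; \prod_{k=0}^{n-1}\frac{M_{x_{k+1},x_k}}{M_{x_k,x_{k+1}}}.
\]

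Next I would apply the microreversibility identity from Lemma~\ref{lem:PFextended}.\ref{lem:PFextended:reversibility} with $k=1$ (the Perron--Frobenius eigenvectors of $\widetilde{\bM}$) and with $(i,j)=(x_k,x_{k+1})$, obtaining
\[
\frac{M_{x_{k+1},x_k}}{M_{x_k,x_{k+1}}} \;=\; \frac{u^{(1)}_{x_k}\, v^{(1)}_{x_{k+1}}}{u^{(1)}_{x_{k+1}}\, v^{(1)}_{x_k}} \;=\; \frac{u^{(1)}_{x_k}/v^{(1)}_{x_k}}{u^{(1)}_{x_{k+1}}/v^{(1)}_{x_{k+1}}}.
\]
By Lemma~\ref{lem:PFextended}.\ref{lem:PFextended:dominant} the entries $u^{(1)}_r,v^{(1)}_r$ are strictly positive for $0<r<N$, so no division by zero arises.

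Plugging this expression into the product gives a telescoping product whose only surviving terms are those at $k=0$ and $k=n$, yielding
\[
\razaoP \;=\; \frac{u^{(1)}_{x_0}/v^{(1)}_{x_0}}{u^{(1)}_{x_n}/v^{(1)}_{x_n}} \;=\; \frac{u^{(1)}_i/v^{(1)}_i}{u^{(1)}_j/v^{(1)}_j},
\]
as claimed. I do not anticipate any real obstacle; the only care needed is to keep the index convention for $\bM$ consistent with the forward evolution $\bp(t+\Delta t)=\bM\bp(t)$, and to note that the result is independent of the intermediate path $(x_1,\dots,x_{n-1})$, a fact which actually expresses a discrete form of ``detailed balance'' for the Moran kernel with respect to the Perron--Frobenius pair $(\bu^{(1)},\bv^{(1)})$.
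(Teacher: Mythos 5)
Your proof is correct and follows essentially the same route as the paper's: write each path probability as a product of one-step transitions, apply the microreversibility identity of Lemma~\ref{lem:PFextended}.\ref{lem:PFextended:reversibility} with $k=1$ factor by factor, and telescope. The extra remarks on positivity of the Perron--Frobenius entries and on the index convention are sound but not needed beyond what the paper states.
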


\begin{proof}
We use Lemma~\ref{lem:PFextended}.\ref{lem:PFextended:reversibility}, with $k=1$, to find
\begin{equation*}
\razaoP=\frac{\prod_{k=0}^{n-1} M_{x_{k+1},x_k}}{\prod_{k=0}^{n-1} M_{x_{k},x_{k+1}}}=\prod_{k=0}^{n-1}\frac{u^{(1)}_{x_{k}}v^{(1)}_{x_{k+1}}}{u^{(1)}_{x_{k+1}}v^{(1)}_{x_{k}}}=\frac{u^{(1)}_{x_0}v^{(1)}_{x_n}}{u^{(1)}_{x_n}v^{(1)}_{x_0}}\ ,
\end{equation*}
which finishes the proof.
\end{proof}

Note that this implies that the flow goes from the maximum of the ratio $z_i\bydef u^{(1)}_i/v^{(1)}_i$ to the boundaries, where $u^{(1)}_0=u^{(1)}_N=0$, and therefore, where $z_i\ge 0$ reaches its minimum.

In particular, in the neutral case, we have, for $0<i<N$,  that $v^{(1)}_i=1$ and $w^{(1)}_i=i(N-i)/C_N$, where $C_N$ is a normalising constant. Hence, we obtain  $\razaoP=\frac{i(N-i)}{j(N-j)}$. \footnote{This derivation can also be obtained using the eigenvector structure of the core matrix. Indeed \[\razaoP = \frac{\mathbf{e}_j\bM^n\mathbf{e}_i}{\mathbf{e}_i\bM^n\mathbf{e}_j},\] and the expressions on the RHS can be expanded  with respect to the eigenbasis of the core.}
It is clear that if $j\approx 0$ or $j\approx N$, $\razaoP\gg1$ and therefore paths linking interior points to the boundary are more likely that the reverse evolution. In fact, invasion is a rare process that occurs only because mutations are frequent (see, e.g.,~\cite{crow1970introduction}).

\begin{remark}
 While in physics, a typical macroscopic system has $10^{23}$ degrees of freedom, numbers in biology are far below, ranging from $10^6$ to $10^9$ individuals. Therefore, an \emph{irreversible} process in physics is normally linked to impossibility (recurrence time of the order of magnitude of the age of the universe), while in biology these events, although unlikely for a short (say, human scale), will be very likely in long (say, geological) times. See also the discussion on non-increasing fixation probabilities for the Wright-Fisher process in~\cite{ChalubSouza18}.
\end{remark}

\begin{remark}\label{rmk:wsp}
 In the large population, weak selection limit, it is useful to assume 
\begin{equation}\label{eq:tsp_fitness}
 \tsp_i=\frac{i}{N}\left[1-\frac{1}{\kappa N}\frac{N-i}{N}V'\left(\frac{i}{N}\right)\right]\ ,
\end{equation}
where $V:[0,1]\to\R$ is the so-called \emph{fitness potential} --- as defined in ; see also~\cite{ChalubSouza_JMB2016}. It is also worth pointing out that this includes the standard two-player games as a special case, as this is equivalent to the choice of a quadratic potential $V$.
In the continuous limit, it follows that (see \cite{CMRS:21})
\[
\lim_{N\to\infty} \frac{u^{(1)}_{\lfloor xN\rfloor}}{Nv^{(1)}_{\lfloor xN\rfloor}}\to C x(1-x)\e^{2V(x)/\kappa}\ ,\qquad x\in(0,1)\ ,\qquad C\in\R\ .
\]
Therefore, in view of Lemma~\ref{lem:prob_paths}, we expect for $N$ large that
\[
\frac{P_{x\to y}}{P_{y\to x}}=\frac{x(1-x)}{y(1-y)}\e^{\frac{2}{k}\left(V(x)-V(y)\right)}\ .
\]
where $P_{x\to y}$ is the probability for a stochastic process to go from state $x$ to $y$.
We conclude that, if $V(x)>V(y)$ ($V(x)<V(y))$, then for $k\ll 1$, $\frac{P_{x\to y}}{P_{y\to x}}=\mathcal{O}(\exp(Ck^{-1}))$ ($\mathcal{O}(\exp(-Ck^{-1}))$, respect.).  Therefore, the probability mass that remains in the interior, i.e. conditioned on non-extinction, flows  from larger values of the potential $V$ to smaller values of $V$. Hence, for small $k$,  the quasi-stationary probability should peak at minima of $V$. Notice also that as either  $x$ or $y$ approaches either 0 or 1 then this ratio tends to 0 or infinity --- this is a consequence of these states being absorbing. This also gives a preferred direction (\emph{arrow of time}) for the evolutionary process.
\end{remark}

From Lemma~\ref{lem:prob_paths} we conclude that any interior local minima of $\sfrac{u^{(1)}_i}{v^{(1)}_i}$ is such that an initial state sufficiently close to it will be initially be attracted to this point. Namely, the outflow of probability mass on site $i$ is smaller than the corresponding inflow, resulting in an equilibrium with local concentration of probability. As a consequence of Lemma~\ref{lem:PFextended}, global minima of $\sfrac{u^{(1)}_i}{v^{(1)}_i}$ are always on the boundaries, the attractors of the Moran process.
The relation between these local-in-time, local-in-space attractors (in the loose definition we sketched, but did not formalize above) and metastable states will be further explored in a forthcoming work; see, however,~\cite{ChalubSouza18}.

\section{The entropy}
\label{sec:entropy}

Let $\bp\in\simplex^{N}\bydef\{\bx\in\R_+^{N+1}|\sum_ix_i=1\}$. We define the entropy of a population at state $\bp$ evolving by the Moran process given by $\bM$ by
\begin{equation}\label{def:entropy}
E(\mathbf{p})=\sum_{i=1}^{N-1}\phi\left(\frac{p_i}{v^{(1)}_i\langle\bu^{(1)},\mathbf{p}\rangle_{N+1}}\right)v^{(1)}_iu^{(1)}_i\ ,
\end{equation}
where $\phi$ is a concave function such that
 $\phi(0)=\phi(1)=0$. Note that, as the summation is from 1 to $N-1$, we may use indistinctly $\widetilde{\bv}^{(i)}$, $\widetilde{\bu}^{(i)}$ or $\bv^{(i)}$, $\bu^{(i)}$, respectively. Furthermore $\langle\bu^{(1)},\bp\rangle_{N+1}=\langle\widetilde{\bu}^{(i)},\widetilde{\bp}\rangle_{N-1}$.

Note that
$E(\bp)$ depends on both the species' features (i.e., on the type selection vector or on the fitness) and on the population state $\bp$ at any given time. Therefore, $E$ is a state-dependent function.

 \begin{remark}\label{rmk:BGS-Tsallis}
As it is customary in mathematics, we opt for a convex function, and therefore the sign is reversed when compared to most texts in physics. This change is immaterial, except that the second law of thermodynamics implies that entropy \emph{decreases} in time. As particular examples, if $\phi(x)=\phi_1(x)\bydef x\log x$, we say that the associated entropy is the Boltzmann-Gibbs-Shannon entropy (or BGS) $E_{\mathrm{BGS}}=E$. If $\phi(x)=\phi_m(x)\bydef\frac{x^m-x}{m-1}$, $m\ne 1$, we call $E_m=E$ the Tsallis $m$-entropy. It is clear that $\lim_{m\to 1}E_m=E_{\mathrm{BGS}}$.
 \end{remark}

 \begin{remark}
  
 Given a vector $\bF=(F_0,F_1,\dots,F_N)$, with $0=F_0<F_1<F_2\dots<F_N=1$, there exists a unique Moran process $\bM$ such that $\bF$ is its fixation vector \cite{ChalubSouza:2017a}. Therefore, from $\bF$, we obtain unique $\bu^{(i)}$, and $\bv^{(i)}$ (in particular, for $i=1$), and therefore, the entropy is uniquely determined by the final evolution of the Moran process. This show that  time (in this case, the number of steps $n$) does not play a role in the entropy formula, even indirectly, showing that the entropy depends only on the present state (given the two types in the population) and not in the previous evolutionary history --  a function of the \emph{state}, not of the \emph{path}. 
 
 On the other hand, given $\bu^{(1)}$ and $\bv^{(1)}$  the entropy $E$ is well defined, but not the Moran process. In fact noting that from Lemma~\ref{lem:PFextended}.\ref{lem:PFextended:reversibility}
 \[
 u^{(1)}_i\frac{i+1}{N}(1-\tsp_{i+1})v^{(1)}_{i+1}=u^{(1)}_{i+1}\frac{N-i}{N}\tsp_iv^{(1)}_i\ ,
 \]
we conclude that the type selection probability can be determined only after imposing the value of $\tsp_i$ for a certain value $i$.
 \end{remark}
 
 An important final comment on expression~\eqref{def:entropy} is that it is optimal in a very precise sense. In fact, if we measure the distance between two probability distributions using the Wasserstein-Shashahani distance (see Remark below), then the evolution given by the Moran process gives the path of maximum entropy decrease along all possible evolutionary trajectories that satisfy natural conservation laws of the Moran process --- cf. \cite{CMRS:21}.

\begin{remark}
 The Shashahani distance between $x,y\in[0,1]$ was explicitly introduced in~\cite{Shahshahani_1979}, but appeared previously in~\cite{cavalli1999genetics}. In the one-dimensional case, it is given by $\int_x^y\frac{\rd z}{\sqrt{z(1-z)}}=2\left|\arcsin\sqrt{x}-\arcsin\sqrt{y}\right|$. The Wasserstein-Shashahani distance is the transport distance between probability measures in a space where point distances are given by the Shashahani distance.
\end{remark}

\subsection{Entropy decay}
\label{ssec:decay}

\begin{lemma}\label{lem:decay}
For any $\bp\in\simplex^{N-1}$, it is true that
$E(\bM\bp)\le E(\bp)$. If $\phi$ is strictly convex, $E(\bM\bp)=E(\bp)$ if and only if $\bp=\alpha\be_0+\lambda_1\bv^{(1)}+\beta\be_N$, for $\alpha,\beta,\lambda_1\ge 0$, $\alpha+\beta+\lambda_1=1$.
\end{lemma}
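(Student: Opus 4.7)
The plan is to recognize $E(\bp)$ as a $\pi$-average of $\phi\circ q$ for suitable renormalized coordinates $q_i$, and then to show that the action of $\bM$ on $\bp$ pulls back to a Markov averaging on the $q_i$'s whose stationary distribution is exactly $\pi$. Jensen's inequality will then deliver $E(\bM\bp)\le E(\bp)$ essentially for free. Concretely, set $\pi_i\bydef u^{(1)}_i v^{(1)}_i$, which by Lemma~\ref{lem:PFextended}.\ref{lem:PFextended:dominant} is a probability vector on $\{1,\dots,N-1\}$, let $Z(\bp)\bydef \langle\bu^{(1)},\bp\rangle_{N+1}$, and for $Z(\bp)>0$ define $q_i(\bp)\bydef p_i/(v^{(1)}_i Z(\bp))$, so that $E(\bp)=\sum_{i=1}^{N-1}\pi_i\,\phi(q_i(\bp))$.

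Writing $\bp'\bydef \bM\bp$ and using $\bu^{(1)}\bM=\mu_1\bu^{(1)}$ from Lemma~\ref{lem:PFextended}.\ref{lem:PFextended:extension} one immediately gets $Z(\bp')=\mu_1 Z(\bp)$, so that with
\[
K_{ij}\bydef \frac{M_{ij}\, v^{(1)}_j}{\mu_1\, v^{(1)}_i}
\]
the identity $q_i(\bp')=\sum_j K_{ij}\, q_j(\bp)$ drops out at once. The right-eigenvector relation $\bM\bv^{(1)}=\mu_1\bv^{(1)}$ yields $\sum_j K_{ij}=1$, while the left-eigenvector relation $\bu^{(1)}\bM=\mu_1\bu^{(1)}$ yields $\sum_i \pi_i K_{ij}=\pi_j$; hence $K$ is a row-stochastic kernel on $\{1,\dots,N-1\}$ with left-invariant law $\pi$. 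Jensen's inequality applied pointwise and summed against $\pi$ then gives
\[
E(\bp')=\sum_i \pi_i\,\phi\!\left(\sum_j K_{ij}\,q_j(\bp)\right)\le \sum_{i,j}\pi_i K_{ij}\,\phi(q_j(\bp))=\sum_j \pi_j\,\phi(q_j(\bp))=E(\bp),
\]
which is the desired inequality.

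For the equality case with $\phi$ strictly convex, Jensen forces $q_j(\bp)$ to be constant on the $K_{i\cdot}$-support for every $i$ with $\pi_i>0$; since $\widetilde{\bM}$ is irreducible tridiagonal with positive off-diagonals, so is $K$, and this constancy propagates to a single value $c$ on $\{1,\dots,N-1\}$. Equivalently $\widetilde{\bp}=c\, Z(\bp)\,\widetilde{\bv}^{(1)}$; setting $\lambda_1\bydef c\, Z(\bp)$, $\alpha\bydef p_0$, $\beta\bydef p_N$, and using $v^{(1)}_0=v^{(1)}_N=0$ together with $\sum_j v^{(1)}_j=1$ produces the claimed decomposition $\bp=\alpha\be_0+\lambda_1\bv^{(1)}+\beta\be_N$ with $\alpha+\lambda_1+\beta=1$. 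The only subtlety is the degenerate case $Z(\bp)=0$: since all $u^{(1)}_i>0$ this forces $\widetilde{\bp}=0$, so $\bp=\alpha\be_0+\beta\be_N$ is itself fixed by $\bM$ and both sides of the inequality trivially vanish. I expect the main conceptual step to be the identification of the correct kernel $K$ and the verification that both eigenvector identities of Lemma~\ref{lem:PFextended} conspire to make $K$ simultaneously stochastic \emph{and} $\pi$-reversible; once this is in place the rest is a one-line Jensen argument.
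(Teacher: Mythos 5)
Your proof is correct, and it reaches the same Jensen-type inequality as the paper, but by a route that is organizationally different in a way worth noting. The paper expands $E(\bM\bp)$ directly and then invokes the microreversibility identity $u^{(1)}_iM_{ij}v^{(1)}_j=u^{(1)}_jM_{ji}v^{(1)}_i$ (Lemma~\ref{lem:PFextended}.\ref{lem:PFextended:reversibility}) to rewrite the argument of $\phi$ with weights $u^{(1)}_jM_{ji}/(\mu_1u^{(1)}_i)$, whose normalization is checked with the \emph{left} eigenvector relation; the resummation after Jensen then uses the right eigenvector relation. Your kernel $K_{ij}=M_{ij}v^{(1)}_j/(\mu_1v^{(1)}_i)$ is numerically the same object (precisely because of microreversibility), but you verify row-stochasticity and $\pi$-invariance of $\pi_i=u^{(1)}_iv^{(1)}_i$ using only the two eigenvector relations, never touching item~\ref{lem:PFextended:reversibility}. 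This is a genuine gain: your argument applies verbatim to any substochastic core with positive left/right Perron vectors and absorbing boundary columns, so it would extend the monotonicity statement for general convex $\phi$ to non-reversible processes such as Wright--Fisher, where the paper can only handle the BGS case via \cite{cover1994processes}. Your treatment of the equality case (strict convexity forces $q$ constant on each row support, irreducibility propagates the constant, and the degenerate case $Z(\bp)=0$ is handled separately) is also fine and slightly more careful than the paper's. The only thing you omit is the trivial converse of the ``if and only if'': when $\bp=\alpha\be_0+\lambda_1\bv^{(1)}+\beta\be_N$ one has $q_j\equiv 1$, hence $E(\bp)=E(\bM\bp)=\phi(1)=0$; add that one line and the proof is complete.
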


\begin{proof}
\begin{align*}
E(\bM\bp)&=\sum_{i=0}^{N-1}\phi\left(\sum_{j=0}^N\frac{M_{ij}p_j}{v^{(1)}_i\langle\bu^{(1)},\bM\bp\rangle}\right)u^{(1)}_iv^{(1)}_i\\
&=\sum_{i=0}^N\phi\left(\sum_{j=0}^N\frac{u^{(1)}_iM_{ij}v^{(1)}_jp_j}{u^{(1)}_iv^{(1)}_jv^{(1)}_i\langle\bM^\dagger\bu^{(1)},\bp\rangle}\right)u^{(1)}_iv^{(1)}_i\\
&=\sum_{i=0}^N\phi\left(\sum_{j=0}^N\frac{u^{(1)}_jM_{ji}v^{(1)}_ip_j}{\mu_1u^{(1)}_iv^{(1)}_jv^{(1)}_i\langle\bu^{(1)},\bp\rangle}\right)u^{(1)}_iv^{(1)}_i\\
&\le\sum_{i,j=0}^N\frac{u^{(1)}_jM_{ji}}{\mu_1u^{(1)}_i}\phi\left(\frac{p_j}{v^{(1)}_j\langle\bu^{(1)},\bp\rangle}\right)u^{(1)}_iv^{(1)}_i\\
&=\sum_{j=1}^{N-1}\phi\left(\frac{p_j}{v^{(1)}_j\langle\bu^{(1)},\bp\rangle}\right)u^{(1)}_jv^{(1)}_j=E(\bp)\ .
\end{align*}
First and last equalities follow from Definition~\ref{def:entropy}; second equality is a simple manipulation and it follows from properties of adjoint matrices; third equality is a consequence of Lemma~\ref{lem:PFextended}, properties~\ref{lem:PFextended:eigenvectors} and~\ref{lem:PFextended:reversibility}. The central step in the above derivation is the inequality at the fourth line, in which convexity of $\phi$ is explicitly used, together with the fact that $\sum_{j=0}^N\frac{u_j^{(1)}M_{ji}}{\mu_1u_i^{(1)}}=1$. Finally, we use that $\sum_{i=0}^NM_{ji}v_i^{(1)}=\mu_1v_j^{(1)}$ and prove the inequality.

If we assume that $\phi$ is strictly convex, the inequality in the fourth line will be strict unless $\frac{p_j}{v^{(1)}_j\langle\bu^{(1)},\bp\rangle}$ is independent of $j$, for $j=1,\dots,N-1$. Finally, if $\bp=\alpha\be_0+\lambda_1\bv^{(1)}+\beta\be_N$, then $E(\bM\bp)=E(\bp)=\phi(1)=0$.
\end{proof}

The last result shows that the entropy changes whenever there is a change in $\bp$ with respect to the quasi-stationary measure $\alpha\be_0+\lambda_1\bv^{(1)}+\beta\be_N$. In this sense, the entropy is a measure of the mixing of a population evolving according to the process $\bM$ \emph{with respect} to the generalised quasi-stationary measure of $\bM$.

\subsection{Asymptotic behaviour}
\label{ssec:asymptotic}

\begin{lemma}\label{lem:asymptotic_decay}
Assume $\mu_i\ne\mu_j$ for all $i\ne j$ and let $\bp$ be such that $\bp\not\in\mathrm{span}\{\be_0,\bv^{(1)},\be_N\}$. Then, there is $j\in\{2,\dots,N-1\}$ such that
\begin{equation}\label{eq:asymptotic_decay}
 E(\bM^n\bp)\approx\e^{-2n\log\frac{\mu_1}{\mu_j}}\frac{m\langle\bu^{(j)},\bp\rangle^2}{2\langle\bu^{(1)},\bp\rangle^2}\sum_{i=1}^{N-1}\left(v^{(j)}_i\right)^2\frac{u^{(1)}_i}{v^{(1)}_i} ,
 \end{equation}
 when $n\to\infty$.
\end{lemma}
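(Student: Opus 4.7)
The plan is to perform a spectral expansion of $\bp$ in the eigenbasis of the core matrix, apply $\bM^n$ to get the explicit form of $\bM^n\bp$ in terms of decaying modes, then Taylor expand $\phi$ around $1$ and use the orthogonality relations to see why the leading correction is quadratic (not linear) in the decaying amplitudes.

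First I would decompose $\widetilde{\bp} = \sum_{k=1}^{N-1} c_k \widetilde{\bv}^{(k)}$ with $c_k = \langle \widetilde{\bu}^{(k)}, \widetilde{\bp}\rangle_{N-1} = \langle \bu^{(k)}, \bp\rangle_{N+1}$, where the last identity uses that $\bu^{(k)}$ vanishes at the boundary by Lemma~\ref{lem:PFextended}.\ref{lem:PFextended:extension}. Applying $\widetilde{\bM}^n$ and using Lemma~\ref{lem:PFextended}.\ref{lem:PFextended:eigenvectors} gives $(\bM^n\bp)_i = \sum_k c_k \mu_k^n v^{(k)}_i$ on the interior, and in particular $\langle \bu^{(1)}, \bM^n\bp\rangle = \mu_1^n c_1$. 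Substituting in the argument of $\phi$ inside the entropy yields
\begin{equation*}
\frac{(\bM^n\bp)_i}{v^{(1)}_i \langle \bu^{(1)}, \bM^n\bp\rangle} = 1 + \sum_{k \ge 2} \frac{c_k}{c_1}\left(\frac{\mu_k}{\mu_1}\right)^n \frac{v^{(k)}_i}{v^{(1)}_i}.
\end{equation*}
Let $j\in\{2,\dots,N-1\}$ be the smallest index with $c_j\neq 0$; the hypothesis $\bp\notin\mathrm{span}\{\be_0,\bv^{(1)},\be_N\}$ guarantees that such a $j$ exists. For large $n$, the sum is dominated by the $k=j$ term, which I denote $\epsilon_i$, and the remaining modes contribute $o(\epsilon_i)$ uniformly in $i$ since $|\mu_{k}/\mu_j|<1$ strictly for $k>j$.

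Next, Taylor expand $\phi(1+\epsilon_i) = \phi'(1)\epsilon_i + \tfrac{1}{2}\phi''(1)\epsilon_i^2 + O(\epsilon_i^3)$, using $\phi(1)=0$. Substituting into $E(\bM^n\bp) = \sum_i \phi(1+\epsilon_i)\, u^{(1)}_i v^{(1)}_i$, the linear contribution is
\begin{equation*}
\phi'(1)\,\frac{c_j}{c_1}\left(\frac{\mu_j}{\mu_1}\right)^n \sum_{i=1}^{N-1} u^{(1)}_i v^{(j)}_i,
\end{equation*}
which vanishes identically by the orthogonality relation $\langle \bu^{(1)}, \bv^{(j)}\rangle = 0$ from Lemma~\ref{lem:PFextended}.\ref{lem:PFextended:ortogonality}. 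Hence the leading asymptotic term is the quadratic one, giving
\begin{equation*}
E(\bM^n\bp)\approx \frac{\phi''(1)}{2}\left(\frac{c_j}{c_1}\right)^2 \left(\frac{\mu_j}{\mu_1}\right)^{2n} \sum_{i=1}^{N-1}\frac{(v^{(j)}_i)^2 u^{(1)}_i}{v^{(1)}_i},
\end{equation*}
which is exactly the claimed formula, with $m=\phi''(1)$ (recovering $m$ in the Tsallis case and $1$ in the BGS case) and with $(\mu_j/\mu_1)^{2n} = \e^{-2n\log(\mu_1/\mu_j)}$.

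The step I expect to require most care is the cancellation of the linear term: this is what produces the factor of $2$ in the exponent and is the only reason the decay rate doubles the spectral gap, so the orthogonality in Lemma~\ref{lem:PFextended}.\ref{lem:PFextended:ortogonality} is doing essential work rather than being cosmetic. Secondarily, one must justify the Taylor expansion uniformly in $i$ and control the cubic remainder together with the subleading modes $k>j$; both follow from the uniform bound $|\epsilon_i|\le C(\mu_j/\mu_1)^n$ on the interior sites where $v^{(1)}_i>0$, guaranteed by Lemma~\ref{lem:PFextended}.\ref{lem:PFextended:dominant}.
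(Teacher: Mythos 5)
Your proposal is correct and follows essentially the same route as the paper's proof: spectral decomposition of $\bp$ in the eigenbasis of the core matrix, Taylor expansion of $\phi$ about $1$, cancellation of the linear term via the orthogonality $\langle\bu^{(1)},\bv^{(j)}\rangle=0$, and identification of the surviving quadratic term with $m=\phi''(1)$. Your added remarks on uniform control of the remainder and of the subleading modes are a welcome refinement of a step the paper passes over silently.
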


\begin{proof}

Let $\bp=\alpha\be_0+\sum_{j=1}^{N-1}\lambda_j\bv^{(j)}+\beta\be_N$ and let $j_*=\min\{i\ge 2|\lambda_i\ne0\}$. 
It is clear that $j_*$ is well defined, otherwise $\bp\in\mathrm{span}\{\be_0,\bv^{(1)},\be_N\}$. Furthermore, $\lambda_1=\langle\bu^{(1)},\bp\rangle>0$.

On one hand, 
$\langle \bu^{(1)},\bM^n\bp\rangle=\langle\left(\bM^\dagger\right)^n\bu^{(1)},\bp\rangle=\mu_1^n\langle\bu^{(1)},\bp\rangle=\mu_1^n\lambda_1$.
 On the other hand, for $i=1,\dots,N-1$,
 \begin{align*}
 \left(\bM^n\bp\right)_i&=\left\langle\sum_{j=1}^{N-1}\lambda_j\mu_j^n\bv^{(j)},\mathbf{e}_i\right\rangle=\sum_{j=1}^{N-1}\lambda_j\mu_j^n\left\langle\bv^{(j)},\mathbf{e}_i\right\rangle\\
 &=\lambda_1\mu^n_1v^{(1)}_i+\sum_{j=j_*}^{N-1}\mu_j^n\lambda_jv^{(j)}_i
 \end{align*}
 
 Therefore,
 \begin{align*}
 E(\bM^n\bp)&=\sum_{i=1}^{N-1}\phi\left(\frac{\left(\bM^n\bp\right)_i}{v^{(1)}_i\langle\bu^{(1)},\bM^n\mathbf{p}\rangle}\right)v^{(1)}_iu^{(1)}_i\\
 &=\sum_{i=1}^{N-1}\phi\left(1+\sum_{j=j_*}^{N-1}\left(\frac{\mu_j}{\mu_1}\right)^{n}\frac{\lambda_jv^{(j)}_i}{\lambda_1 v^{(1)}_i}\right)v^{(1)}_iu^{(1)}_i\\
 &\approx\sum_{i=1}^{N-1}\biggl\{\phi(1)+\phi'(1)\sum_{j=j_*}^{N-1}\left(\frac{\mu_j}{\mu_1}\right)^{n}\frac{\lambda_jv^{(j)}_i}{\lambda_1 v^{(1)}_i}
 +\frac{\phi''(1)}{2}\left(\frac{\mu_{j_*}^n\lambda_{j_*}v^{(j_*)}_i}{\mu_1^n\lambda_1 v^{(1)}_i}\right)^2\biggr\}v^{(1)}_iu^{(1)}_i
 \end{align*}

 For the BGS entropy and Tsallis entropy, $\phi(1)=0$, $\phi'(1)=1$ and $\phi''(1)=m$ ($m=1$ in the BGS case and $m\ne 1$ in the Tsallis case).
 
 Finally
 \begin{align*}
 E(\bM^n\bp)&\approx\sum_{j=j_*}^{N-1}\left(\frac{\mu_j}{\mu_1}\right)^n\frac{\lambda_j}{\lambda_1}\underbrace{\sum_{i=1}^{N-1}v^{(j)}_iu^{(1)}_i}_{=\langle\bv^{(j)},\bu^{(1)}\rangle=0}+\frac{m}{2}\left(\frac{\mu_{j_*}}{\mu_1}\right)^{2n}\frac{\lambda_{j_*}^2}{\lambda_1^2}\sum_{i=1}^{N-1}\left(v^{(j_*)}_i\right)^2\frac{u^{(1)}_i}{v^{(1)}_i}\\
 &=\e^{-2n\log\frac{\mu_1}{\mu_{j_*}}}\frac{m\langle\bu^{(j_*)},\bp\rangle^2}{2\langle\bu^{(1)},\bp\rangle^2}\sum_{i=1}^{N-1}\left(v^{(j_*)}_i\right)^2\frac{u^{(1)}_i}{v^{(1)}_i}\ .
 \end{align*}
 
 \end{proof}
 
 
 \begin{cor}\label{cor:quasi_stationary}
Let $\bp\in\simplex^{N+1}$. Then $\bp\not\in\mathrm{span}\{\be_0,\bv^{(1)},\be_N\}$ if and only if there exists $n_0\in\mathbb{N}$ and $\bar p\in\simplex^{N+1}$ such that for $n>n_0$, $E(\bM^n\bar\bp)<E(\bM^n\bp)$.
 \end{cor}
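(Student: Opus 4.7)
The plan is to reduce the corollary to a single non-trivial fact: $E(\bp)\ge 0$ for every $\bp\in\simplex^{N+1}$, with equality precisely when $\bp\in V\bydef\mathrm{span}\{\be_0,\bv^{(1)},\be_N\}$. Granted this, the corollary follows by two short bookkeeping steps.

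To establish $E\ge 0$ and locate its zeros, I would apply Jensen's inequality directly to the definition~\eqref{def:entropy}. By Lemma~\ref{lem:PFextended}, $w_i\bydef v^{(1)}_iu^{(1)}_i$ is a probability distribution on $\{1,\dots,N-1\}$ (positive terms summing to $\langle\widetilde{\bu}^{(1)},\widetilde{\bv}^{(1)}\rangle=1$), and the arguments $x_i\bydef p_i/(v^{(1)}_i\langle\bu^{(1)},\bp\rangle)$ satisfy $\sum_ix_iw_i=\langle\bu^{(1)},\bp\rangle^{-1}\sum_ip_iu^{(1)}_i=1$ by direct computation. Convexity of $\phi$ together with $\phi(1)=0$ then yields $E(\bp)\ge\phi(1)=0$; strict convexity forces the equality case $x_i\equiv 1$, which means $p_i=\langle\bu^{(1)},\bp\rangle v^{(1)}_i$ for every interior $i$ and hence $\bp\in V$. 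The degenerate case $\langle\bu^{(1)},\bp\rangle=0$ collapses to $\bp=\alpha\be_0+\beta\be_N\in V$ with $E(\bp)=(N-1)\phi(0)=0$ by a separate direct inspection.

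Next, I would observe that $V$ is $\bM$-invariant, since each of its generators is either a fixed point or an eigenvector of $\bM$ (Lemma~\ref{lem:PFextended}.\ref{lem:PFextended:extension}); the complement $\mathrm{span}\{\bv^{(j)}\}_{j\ge 2}$ is invariant too, and assuming the non-dominant eigenvalues are non-zero, the spectral decomposition $\bp=\alpha\be_0+\sum_j\lambda_j\bv^{(j)}+\beta\be_N$ gives $\bM^n\bp\in V$ iff $\bp\in V$. Combined with the previous step, $E(\bM^n\bp)\equiv 0$ when $\bp\in V$, while $E(\bM^n\bp)>0$ for every $n$ when $\bp\notin V$.

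Both directions of the corollary now follow. For $(\Rightarrow)$, given $\bp\notin V$ take $\bar\bp\bydef\be_0\in V\cap\simplex^{N+1}$; since $\bM\be_0=\be_0$, we obtain $E(\bM^n\bar\bp)=0<E(\bM^n\bp)$ for every $n\ge 0$, so any $n_0$ works. For $(\Leftarrow)$ by contrapositive, if $\bp\in V$ then $E(\bM^n\bp)\equiv 0$ and, since $E\ge 0$ globally, $E(\bM^n\bar\bp)<E(\bM^n\bp)$ cannot hold for any $\bar\bp\in\simplex^{N+1}$ at any single $n$, ruling out the existence of the required $n_0$ and $\bar\bp$. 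The only non-routine step is the Jensen argument in the second paragraph; the remainder is spectral bookkeeping using the lemmas already proved.
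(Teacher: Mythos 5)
Your proof is correct, and it reaches the forward implication by a genuinely different route from the paper's. The backward direction is identical in both: if $\bp\in\mathrm{span}\{\be_0,\bv^{(1)},\be_N\}$ then $E(\bM^n\bp)=0$ for all $n$, and $E\ge 0$ globally, so no $\bar\bp$ can do strictly better. For the forward direction the paper invokes Lemma~\ref{lem:asymptotic_decay}: $E(\bM^n\bp)$ is asymptotically a positive constant times $\langle\bu^{(j)},\bp\rangle^2(\mu_j/\mu_1)^{2n}$, and one chooses $\bar\bp$ with $|\langle\bu^{(j)},\bar\bp\rangle|<|\langle\bu^{(j)},\bp\rangle|$. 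You instead prove the pointwise statement that $E\ge 0$ with equality exactly on $V=\mathrm{span}\{\be_0,\bv^{(1)},\be_N\}$ (your Jensen computation is, in substance, the final corollary of Section~\ref{ssec:multiloci}), observe that $V$ and its spectral complement are $\bM$-invariant, and take $\bar\bp=\be_0$. Your version buys two things: the inequality holds for every $n$ rather than only asymptotically, and it sidesteps a soft spot in the paper's choice of $\bar\bp$ --- a vector with a smaller $j$-th coefficient could still carry a nonzero component along some $\bv^{(j')}$ with $2\le j'<j$, in which case its entropy decays more slowly and the desired inequality fails for large $n$; taking $\bar\bp\in V$ avoids this entirely. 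The price is that you need strict convexity of $\phi$ (the paper needs $\phi''(1)>0$ for its expansion, so this is comparable) and, as you explicitly flag, nonvanishing of the non-dominant eigenvalues so that $\bM^n\bp\notin V$ for all $n$; the paper's argument silently requires the same hypothesis (if $\mu_{j_*}=0$ the statement itself can fail after one step), so this is a shared caveat of the corollary rather than a gap specific to your proof.
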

 
 \begin{proof}
 $\Rightarrow$  Let $j$ be as in Lemma~\ref{lem:asymptotic_decay}.
 Note that $\sum_{i=1}^{N-1}\left(v^{(j)}_i\right)^2\frac{u^{(1)}_i}{v^{(1)}_i}>0$ does not depend on the initial condition and consider $\bar\bp\in\simplex^{N+1}$ such that $|\langle\bu^{(j)},\bar\bp\rangle|<|\langle\bu^{(j)},\bp\rangle|$. $\Leftarrow$. Assume $\bp\in\mathrm{span}\{\be_0,\bv^{(1)},\be_N\}$; then $E(\bM^n\bp)=0$ for all $n$, and for any $\bp\in\simplex^{N+1}$, $E(\bp)\ge 0$.
 \end{proof}
 
In the generic case (i.e., if $\lambda_2\ne 0$), and assuming the weak selection principle (cf. Remark~\ref{rmk:wsp}), the entropy decay rate is proportional to the spectral gap of $\bM$. Minimum entropy occur for $\bp=\alpha\be_0+\lambda_1\bv^{(1)}+\beta\be_N$.

\begin{cor}\label{cor:entropy_decay_WSP}
 Assume the same conditions as Lemma~\ref{lem:decay}, assume that the weak selection principle, Equation~\eqref{eq:tsp_fitness}, is satisfied for a certain smooth function $V$ and consider $\lambda_2\ne 0$. If $N\gg 1$, the decay rate of the entropy is proportional to the spectral gap of $\widetilde{\bM}$.
\end{cor}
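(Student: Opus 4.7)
The plan is to combine Lemma~\ref{lem:asymptotic_decay} with the large-$N$ behaviour of the top eigenvalues of $\widetilde{\bM}$ under weak selection. Since $\lambda_2 \ne 0$ by hypothesis, the minimal index in Lemma~\ref{lem:asymptotic_decay} is $j_* = 2$, so the exponential decay rate of $E(\bM^n\bp)$ per step equals
\begin{equation*}
r_N := 2\log(\mu_1/\mu_2).
\end{equation*}
Writing $g_N := \mu_1 - \mu_2$ for the spectral gap of $\widetilde{\bM}$, the claim amounts to showing $r_N / g_N$ tends to a positive constant as $N\to\infty$.

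The crucial ingredient is that, under the weak selection scaling~\eqref{eq:tsp_fitness} with smooth potential $V$, the leading eigenvalues of $\widetilde{\bM}$ satisfy $1 - \mu_k = O(N^{-2})$ for each fixed $k$. Heuristically, this follows because the Moran process, parabolically rescaled in time, converges to the Kimura diffusion generated by $V$ on $(0,1)$, and the $\mu_k$ coincide with the diffusion eigenvalues up to $O(N^{-2})$ corrections --- this is the spectral convergence used in~\cite{CMRS:21} to pass to the continuum. In particular, $\mu_1,\mu_2 \to 1$.

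Given this, a Taylor expansion of $\log$ about $1$ yields $\log\mu_k = -(1-\mu_k) + O((1-\mu_k)^2)$, hence
\begin{equation*}
r_N = 2(\mu_1 - \mu_2) + O\bigl((1-\mu_2)^2\bigr) = 2 g_N\bigl[1 + O(N^{-2})\bigr],
\end{equation*}
so $r_N \sim 2 g_N$ and the proportionality constant is $2$. The main obstacle is the quantitative spectral statement $1 - \mu_k = O(N^{-2})$: one must justify it uniformly in the weak selection parameters, not just at the level of the formal continuum limit. The cleanest route is to write $\widetilde{\bM} = \bI + N^{-2} L_N$, where $L_N$ is a consistent discretization of the Kimura generator associated with $V$, and invoke a standard finite-difference spectral convergence argument --- or equivalently rely on the spectral analysis already carried out in~\cite{CMRS:21}.
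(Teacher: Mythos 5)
Your argument is correct and follows essentially the same route as the paper: establish that $\mu_1,\mu_2\to 1$ under the weak selection scaling and then expand $\log(\mu_1/\mu_2)\approx\mu_1-\mu_2$. The only real difference is in how the spectral fact is justified --- the paper writes $\bM=\bM^{(\mathrm{N})}\left[\mathrm{I}+\frac{1}{\kappa N}\mathbf{D}\right]$ as a perturbation of the neutral Moran matrix, whose eigenvalues $1-\frac{i(i+1)}{N^2}$ are known explicitly, and invokes Kato's continuity of the spectrum, which is precisely the perturbative route you sketch at the end of your proposal.
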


\begin{proof}
From Lemma~\ref{lem:PFextended}, we have that $\mu_1=\frac{\sum_{i,j=1}^{N-1}M_{ij}v_i^{(1)}}{\sum_{j=1}^{N-1}v^{(1)}_j}$, i.e., $\mu_1$ is the weighted average, with weights given by $v_i^{(1)}>0$, of the row sums $\sum_{j=1}^{N-1}M_{ij}$. Therefore
 \[
  \min_{i\in\{1,\dots,N-1\}}\sum_{j=1}^{N-1}M_{ij}<\mu_1<\max_{i\in\{1,\dots,N-1\}}\sum_{j=1}^{N-1}M_{ij}\ ,
 \]
and, finally.
\[
 \min\left\{1-\frac{1}{N}(1-s_1),1-\frac{s_{N-1}}{N-1}\right\}<\mu_1<1\ ,
\]
and, therefore $\lim_{N\to\infty}\mu_1=1$. 

With the assumption of weak selection, 
\[
M_{i\pm1,i}=\frac{i(N-i)}{N^2}\left[1\mp\frac{i(N-i)}{\kappa N^3}V'\left(\frac{i}{N}\right)\right]\ ,
\]
i.e., $\bM=\bM^{(\mathrm{N})}\left[\mathrm{I}+\frac{1}{\kappa N}\mathbf{D}\right]$, where $\mathbf{M}^{(\mathrm{N})}$ is the neutral Moran matrix, and $\mathbf{D}$ is a tridiagonal perturbation matrix. From the functional continuity of the spectrum with respect to the matrices entries --- see \cite[Theorem 5.2]{Kato_1995} and from the fact that the eigenvalues of the neutral Moran process are given by $\mu_i^{\mathrm{(N)}}=1-\frac{i(i+1)}{N^2}$~\cite{Moran}, we conclude that $\mu_i\lesssim 1$ for $N\gg 1$.
Finally $\log\frac{\mu_1}{\mu_{2}}\approx \mu_1-\mu_{2}$, and this finishes the proof.
\end{proof}

\begin{remark}

The rate at which $\bM^n$ approaches the stationary distribution in $\mathop{\mathrm{span}}\{\be_0,\be_N\}$ is given by  $\mu_1$, the leading eigenvalue of $\widetilde{\bM}$. On the other hand, assuming $\lambda_2\ne0$, and $\bp\not\in\mathop{\mathrm{span}}\{\be_0,\bv^{(1)},\be_N\}$ the entropy decays according to the spectral gap of the core matrix $\widetilde{\bM}$, $\mu_1-\mu_2$.
The latter value measures the rate in which the interior points of the vector $\bM^n\bp$ approaches the quasi-stationary distribution, while the former measures the rate in which the probability distribution approaches the final stationary distribution. --- cf. \cite{meleard2012quasi}. See \cite{ChalubSouza14b} for a similar calculation for the stochastic SIS process and \cite{velleret2019two,velleret2020individual} for results on nested Moran processes.
\end{remark}


 
\subsection{Additivity, subadditivity and multiloci evolution}
\label{ssec:multiloci}

In this subsection (and only here), we use $\bz=\bv^{(1)}$ and $\bw=\bu^{(1)}$.

 Consider $n\ge 2$ independently evolving loci in a population of fixed size. Each loci is occupied by one out of two alleles.
 Therefore, let $p_{i_1i_2\dots i_n}$ be the probability to find the population at state $i_j$, when only the locus $j$ is taken in consideration.

 Therefore, the full system can be described using a tensor product representation. Indeed, let us assume that, for each locus, evolution is governed by a two types Moran process. We write $\bM_k$ and $\bp^{(k)}$ for the corresponding Moran transition matrix and probability vector, respectively. In addition, let  $\bp\bydef\bigotimes_{k=1}^n \bp^{(k)}$ and $\bM \bydef \bigotimes_{k=1}^n\bM_k$, where $\otimes$ denotes the Kronecker product of matrices --- it is easily checked that $\bM$ is column stochastic, since so are all the $\bM_k$. Then, the full dynamics is given by
 \[
 \bM\bp=\bigotimes_{k=1}^n\bM_k\bp^{(k)}.
 \]

The Matrix $\bM$ is outside the scope of~\cite{CMRS:21}, but for every matrix $\bM_k$, the theory developed in \cite{CMRS:21} applies immediately. 
  
  We will show in the remaining of this section that the entropy defined in equation~\eqref{def:entropy} 
  can be easily extended to this case. Furthermore, assuming independent loci, BGS entropy corresponds to additivity (i.e., the entropy of the full multiloci system is the sum of the entropy of each locus) and Tsallis entropy implies the existence of epistasis in the system. Note that both BGS and Tsallis entropy discussed here are, in fact, generalizations for processes with absorbing states of  standard definitions for irreducible processes.
 
 The discussion below provides a consistent definition of entropy as informational entropy (i.e., in the case of epistasis, the knowledge of the macroscopic observable and one microscopic variable provides information on the other variables; in the non-epistatic case, this is not true).

We assume that in the multiloci case, the entropy of independently evolving alleles is given by
\[
E(\bp)=\sum_{\bi}\phi\left(\frac{p_\bi}{z_\bi\langle\bw,\bp\rangle}\right)z_\bi w_\bi\ .
\]

From the definition of $\bM$, it follows that $\bz=\otimes_k\bz^{(k)}$, $\bw=\otimes_k\bw^{(k)}$. Finally $\langle\bw,\bp\rangle=\prod_k\langle\bw^{(k)},\bp^{(k)}\rangle$.

Firstly, let us show that the BGS entropy is additive, i.e., if we assume that  $\phi(x)=x\log x$, then
\[
E(\bp)=\sum_k E(\bp^{(k)})\ .
\]

Indeed,
\begin{align*}
E(\bp)&=\sum_\bi\frac{p_\bi}{z_i\langle\bw,\bp\rangle}\log\frac{p_\bi}{z_i\langle\bw,\bp\rangle}z_\bi w_\bi\\
&=\sum_\bi\frac{p_\bi w_\bi}{\langle\bw,\bp\rangle}\log\prod_k\frac{p^{(k)}_{i_k}}{z^{(k)}_{i_k}\langle\bw^{(k)},\bp^{(k)}\rangle}\\
&=\sum_{\bi}\sum_k\frac{\bp_\bi\bw_\bi}{\langle\bw,\bp\rangle}\log\frac{p^{(k)}_{i_k}}{z^{(k)}_{i_k}\langle\bw^{(k)},\bp^{(k)}\rangle}\\
&=\sum_k\sum_{i_k}\sum_{i_j,\,j\ne k}\frac{\bp_\bi\bw_\bi}{\langle\bw,\bp\rangle}\log\frac{p^{(k)}_{i_k}}{z^{(k)}_{i_k}\langle\bw^{(k)},\bp^{(k)}\rangle}\\
&=\sum_k\sum_{i_k}\frac{p^{(k)}_{i_k}w^{(k)}_{i_k}}{\langle\bw^{(k)},\bp^{(k)}\rangle}\log\frac{p^{(k)}_{i_k}}{z^{(k)}_{i_k}\langle\bw^{(k)},\bp^{(k)}\rangle}=\sum_kE(\bp^{(k)})\ .
\end{align*}

This also suggests that additivity of the entropy should be equivalent to assume BGS entropy --- this turns out to be correct, but we will refrain to discuss this further. In any case, if two loci have independent effects (i.e., knowledge of the frequency of one particular allele does not provide information of any kind to the current status of the other generation), then it is natural to assume additive entropy and therefore, BGS entropy.

If both alleles are subject to similar (correlated) selective forces, such that the result of one evolution provides information on the other allele evolution, the level of information provided by both evolutions will be smaller than the sum of information provided by each case. This is the case in which we shall use Tsallis $m$-entropy (see Remark.~\ref{rmk:BGS-Tsallis}).

We start by $(xy)^m-xy=(x^m-x)y+x(y^m-y)+(x^m-x)(y^m-y)$, and therefore
\begin{equation*}
E(\bp)=\frac{1}{m-1}\sum_{i=1}^{N-1}\left[\left(\frac{ p_i}{z_i\langle\bw,\bp\rangle}\right)^m-\frac{p_i}{z_i\langle\bw,\bp\rangle}\right]w_iz_i\ .
\end{equation*}

Assume (as before) the evolution of two independent alleles:
{\footnotesize
\begin{align*}
E(\bp)&=\frac{1}{m-1}\sum_{i,j=1}^{N-1}\left[\left(\frac{p^{(1)}_ip^{(2)}_j}{z^{(1)}_iz^{(2)}_j\langle\bw^{(1)},\bp^{(1)}\rangle\langle\bw^{(2)},\bp^{(2)}\rangle}\right)^m-\frac{p^{(1)}_ip^{(2)}_j}{z^{(1)}_iz^{(2)}_j\langle\bw^{(1)},\bp^{(1)}\rangle\langle\bw^{(2)},\bp^{(2)}\rangle}\right]w^{(1)}_iw^{(2)}_jz^{(1)}_iz^{(2)}_j\\
&=\frac{1}{m-1}\sum_{i,j=1}^{N-1}\Biggl\{\left[\left(\frac{p^{(1)}_i}{z^{(1)}_i\langle\bw^{(1)},\bp^{(1)}\rangle}\right)^m-\frac{p^{(1)}_i}{z^{(1)}_i\langle\bw^{(1)},\bp^{(1)}\rangle}\right]\frac{p^{(2)}_j}{z^{(2)}_j\langle\bw^{(2)},\bp^{(2)}\rangle}\\
&\quad+\frac{p^{(1)}_i}{z^{(1)}_i\langle\bw^{(1)},\bp^{(1)}\rangle}\left[\left(\frac{p^{(2)}_j}{z^{(2)}_j\langle\bw^{(2)},\bp^{(2)}\rangle}\right)^m-\frac{p^{(2)}_j}{z^{(2)}_j\langle\bw^{(2)},\bp^{(2)}\rangle}\right]\\
&\quad+\left[\left(\frac{p^{(1)}_i}{z^{(1)}_i\langle\bw^{(1)},\bp^{(1)}\rangle}\right)^m-\frac{p^{(1)}_i}{z^{(1)}_i\langle\bw^{(1)},\bp^{(1)}\rangle}\right]\left[\left(\frac{p^{(2)}_j}{z^{(2)}_j\langle\bw^{(2)},\bp^{(2)}\rangle}\right)^m-\frac{p^{(2)}_j}{z^{(2)}_j\langle\bw^{(2)},\bp^{(2)}\rangle}\right]\Biggr\}z_i^{(1)}w_i^{(1)}z_j^{(2)}w_j^{(2)}\\
&=E(\bp^{(1)})+E(\bp^{(2)})+(m-1)E(\bp^{(1)})E(\bp^{(2)})\ .
\end{align*}
}

Note that $E(\bp)\ge 0$ for all $\bp$. 

If $m>1$, then $E(\bp)\ge E(\bp^{(1)})+E(\bp^{(2)})$. Considering that as times pass (i.e., as we gain information on the possible outcomes of the evolution), the entropy decreases, this means that whenever we have \emph{correlated} evolution, the information provided by the joint evolution will be smaller than the information provided by the separated loci.
 
\begin{cor}
For any $\bp\in\simplex^{N}$, $E(\bp)\ge 0$. Furthermore, if $\phi$ is strictly convex (i.e., $\phi(x)=0$ if and only if $x=0,1$), then $E(\bp)= 0$ if, and only if, $\bp=\alpha\be_0+\lambda_1\bv^{(1)}+\beta\be_N$.
\end{cor}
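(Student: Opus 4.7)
The plan is to recognize that this corollary is essentially a direct Jensen inequality applied to the definition~\eqref{def:entropy}, once one identifies the right probability weights. First I would note that by Lemma~\ref{lem:PFextended}.\ref{lem:PFextended:dominant} we have $\sum_{i=1}^{N-1} u^{(1)}_i v^{(1)}_i = \langle \widetilde{\bu}^{(1)}, \widetilde{\bv}^{(1)} \rangle = 1$, and $u^{(1)}_i, v^{(1)}_i > 0$ for interior $i$, so the numbers $\{u^{(1)}_i v^{(1)}_i\}_{i=1}^{N-1}$ form a probability distribution on the interior indices.

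Next, setting $y_i \bydef p_i/(v^{(1)}_i \langle \bu^{(1)}, \bp \rangle)$, I would compute the weighted mean:
\begin{equation*}
\sum_{i=1}^{N-1} u^{(1)}_i v^{(1)}_i\, y_i = \frac{1}{\langle \bu^{(1)}, \bp \rangle} \sum_{i=1}^{N-1} u^{(1)}_i p_i = \frac{\langle \bu^{(1)}, \bp \rangle}{\langle \bu^{(1)}, \bp \rangle} = 1,
\end{equation*}
using $u^{(1)}_0 = u^{(1)}_N = 0$. Jensen's inequality applied to the convex $\phi$ (with $\phi(1)=0$) then gives
\begin{equation*}
E(\bp) = \sum_{i=1}^{N-1} u^{(1)}_i v^{(1)}_i\, \phi(y_i) \;\ge\; \phi(1) = 0,
\end{equation*}
which is the first assertion.

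For the equality case, strict convexity of $\phi$ forces the Jensen inequality to be strict unless all $y_i$ coincide; since their weighted mean is $1$, this means $p_i = v^{(1)}_i \langle \bu^{(1)}, \bp\rangle$ for every $i = 1, \dots, N-1$. Setting $\lambda_1 \bydef \langle \bu^{(1)}, \bp\rangle \ge 0$, $\alpha \bydef p_0$ and $\beta \bydef p_N$, this is precisely $\widetilde{\bp} = \lambda_1 \widetilde{\bv}^{(1)}$, hence $\bp = \alpha \be_0 + \lambda_1 \bv^{(1)} + \beta \be_N$; the normalization $\alpha + \lambda_1 + \beta = 1$ follows from $\bp \in \simplex^N$ together with $\sum_j v^{(1)}_j = 1$. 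Conversely, plugging a vector of this form back into~\eqref{def:entropy} yields $y_i \equiv 1$ (using $\langle \bu^{(1)}, \bv^{(1)}\rangle = 1$), so $E(\bp) = \phi(1) \cdot 1 = 0$.

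I do not anticipate a real obstacle, since both directions are essentially one-line computations; the only subtle point is remembering that the weights $u^{(1)}_i v^{(1)}_i$ automatically sum to one thanks to the normalization adopted in Lemma~\ref{lem:PFextended}, and that the argument $y_i$ has weighted mean exactly $1$ (so that the Jensen lower bound is $\phi(1) = 0$ rather than some less convenient value). The equality characterization essentially duplicates the strict-convexity discussion already made in the proof of Lemma~\ref{lem:decay}, so it could even be stated as a direct corollary of that argument.
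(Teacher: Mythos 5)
Your proof is correct, but it takes a genuinely different route from the paper's. The paper argues non-negativity directly from ``the non-negativity of $\phi$'', and handles the equality case by noting that $E(\bp)=0$ forces each ratio $p_i/(v^{(1)}_i\langle\bu^{(1)},\bp\rangle)$ into $\{0,1\}$, then ruling out a mixed configuration by pairing with $\bu^{(1)}$: if some ratios were $0$ and the rest $1$, one would get $\langle\bu^{(1)},\bp\rangle=\sum_i u^{(1)}_ip_i<\langle\bu^{(1)},\bp\rangle\sum_iu^{(1)}_iv^{(1)}_i=\langle\bu^{(1)},\bp\rangle$, a contradiction. That argument presupposes $\phi\ge0$, which fits the paper's reading of ``strictly convex'' as ``$\phi(x)=0$ iff $x\in\{0,1\}$'' but is \emph{not} satisfied by the generating functions of Remark~\ref{rmk:BGS-Tsallis} (e.g.\ $x\log x<0$ on $(0,1)$). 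Your Jensen route --- exploiting that $\{u^{(1)}_iv^{(1)}_i\}_{i=1}^{N-1}$ is a probability vector and that the arguments $y_i$ have weighted mean exactly $1$ --- yields $E(\bp)\ge\phi(1)=0$ with no sign assumption on $\phi$, and standard strict convexity pins $y_i\equiv1$ at equality; this is more robust, covers the BGS and Tsallis cases as actually defined, and, as you note, reuses precisely the mechanism of Lemma~\ref{lem:decay}. One small caveat: the extension $\bv^{(1)}$ of Lemma~\ref{lem:PFextended} has boundary entries determined by $\bM\bv^{(1)}=\mu_1\bv^{(1)}$, which are generically nonzero, so strictly one should take $\alpha=p_0-\lambda_1v^{(1)}_0$ and $\beta=p_N-\lambda_1v^{(1)}_N$ rather than $p_0$ and $p_N$; this does not affect the conclusion as stated (the corollary imposes no normalization on $\alpha,\beta,\lambda_1$), and the same convention issue already appears in Lemma~\ref{lem:decay}.
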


\begin{proof}
The non-negativity of $E(\bp)$ follows from the non-negativity of $\phi$. For the last assertion, $E(\bp)=0$ implies that $\frac{p_i}{v^{(1)}_i\langle\bu^{(1)},\bp\rangle}\in\{0,1\}$ for all $i$. Assume, there is $i_1,i_2\in\{1,\dots,N-1\}$ such that $p_{i_1}=0$ and $p_{i_2}=v^{(1)}_{i_2}\langle\bu^{(1)},\bp\rangle\ne 0$. Multiplying $p_i$ by $\bu^{(1)}_i$ and adding over $i$, we conclude that $\langle\bu^{(1)},\bp\rangle<\langle\bu^{(1)},\bp\rangle$, which provides a contradiction, and this finishes the proof.
\end{proof}

\section{Beyond the original definition}
\label{sec:beyond}

In this section, we explore some of the properties of Entropy given by Equation~\eqref{def:entropy} going beyond the results we were able to provide rigorous proofs.

\subsection{The Wright-Fisher process}
\label{ssec:WF}

The Wright-Fisher process was introduced in~\cite{wright1931evolution} (see also \cite{crow1970introduction}) and it is a Markov process defined by a transition matrix $M_{ij}=\binom{N}{i}\tsp_j^i(1-\tsp_j)^{N-i}$, where $\tsp_i$ is the probability to sample for reproduction the focal type, when there are $i$ individuals of the focal type in a population of fixed size $N$.

The Wright-Fisher is a conundrum in the current discussion: on the one hand, as a stochastic process, it is very similar to the Moran process, even sharing the same diffusion limit. On the other hand,  we have not been able to endow the Wright-Fisher process with a variational structure, since it does not possesses the critical property stated in Lemma~\ref{lem:PFextended:reversibility}.\ref{lem:PFextended} --- see the discussion in \cite{CMRS:21}.

However, it is possible to show that the BGS entropy defined from Equation~(\ref{def:entropy}) decreases for any initial condition in the discrete time Wright-Fisher process $\bp_{n+1}=\bM\bp_n$. To prove this claim we we consider the  $Q$-process associated to the Wright-Fisher process. This is a process defined from the core matrix of the Wright-Fisher transition matrix; namely, consider the  $(N-1)\times(N-1)$ row-stochastic matrix
\[
Q_{ij}=\frac{1}{\mu_1u^{(1)}_i}M_{ij}u^{(1)}_j\ ,\quad i,j=1,\dots,N-1\ .
 \]
 The Wright-Fisher process is equivalent to the left-multiplication of the new variable $\bq=\frac{w_ip_i}{\sum_{i=1}^{N-1}w_ip_i}$.
 The stationary distribution of matrix $\bQ$ is given by $u^{(1)}v^{(1)}$. See~\cite[Section~2.3]{CMRS:21} for further details. In this context, the entropy given by Equation~\eqref{def:entropy} can be recast as the relative entropy with respect to the stationary distribution of this $Q$ process, and we can then apply a result on the decay of BGS relative entropy for irreducible processes --- cf. \cite[Theorem 4]{cover1994processes} --- to conclude that the entropy definied by Equation~\eqref{def:entropy} with the choice of BGS relative entropy is non-increasing, with equality if, and only if, $\bq$ is the stationary distribution of this $Q$ process.
 
The proof in \cite{cover1994processes} does use specific properties of the BGS entropy, and thus it cannot be readily extended to other entropies. In particular, we are not aware of any similar result to Tsallis entropies. On the other hand, the proof for the BGS entropy works for any process in the Kimura class as defined in \cite{ChalubSouza:2017a}.
{To the best of our knowledge, there is no similar result for Tsallis entropies.}

 In Figure~\ref{fig:wf} we consider several different Wright-Fisher processes, assuming weak selection, where the interaction iså given by a two-player game. Namely, we consider the interaction given by a pay-off matrix
 \[
  \mathbf{A}=\left(\begin{matrix}a&b\\ c&d\end{matrix}\right)
 \]
 Fitnesses of types \A and \B (corresponding to the first and second strategies in the pay-off matrix) are given by $\Psi^{(\A)}(x)=ax+(1-x)b$ and $\Psi^{(\B)}(x)=cx+(1-x)d$, where $x$ is the fraction of type \A individuals present in the population; finally
 \[
  \tsp_i=\frac{i\Psi^{(\A)}\left(\frac{i}{N}\right)}{i\Psi^{(\A)}\left(\frac{i}{N}\right)+(N-i)\Psi^{(\B)}\left(\frac{i}{N}\right)}\ .
 \]
In the weak selection assumption, matrix entries $a,b,c,d$ are such that $(a-1)N$ is finite when $N\to\infty$ and similarly for the other entries.

Figure~\ref{fig:wf} indeed confirms this decaying behaviour and brings some new information on the behaviour of the entropy for different initial conditions --- see the discussion in the next subsection.



\begin{figure}
\includegraphics[width=0.45\linewidth]{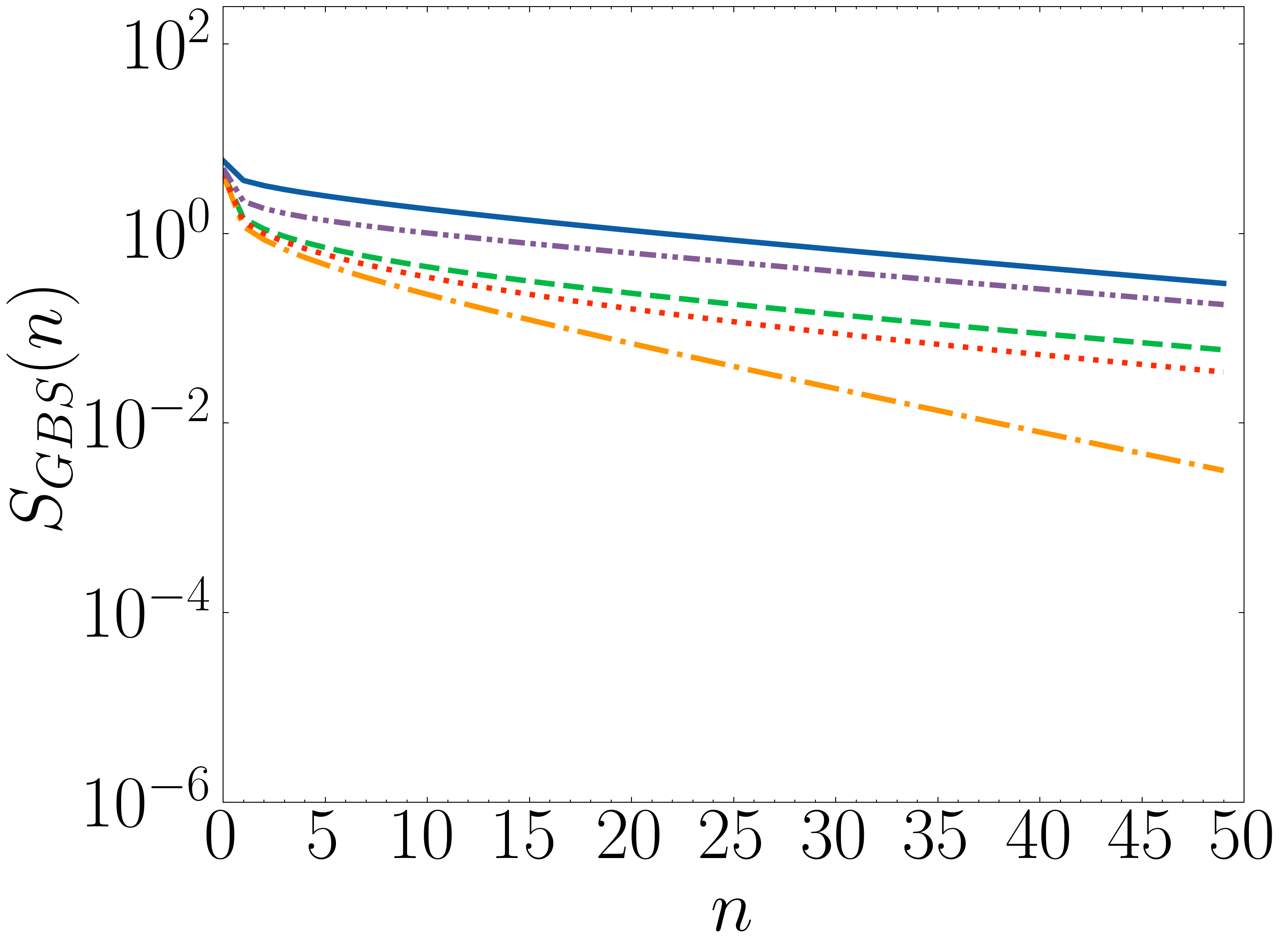}
\includegraphics[width=0.45\linewidth]{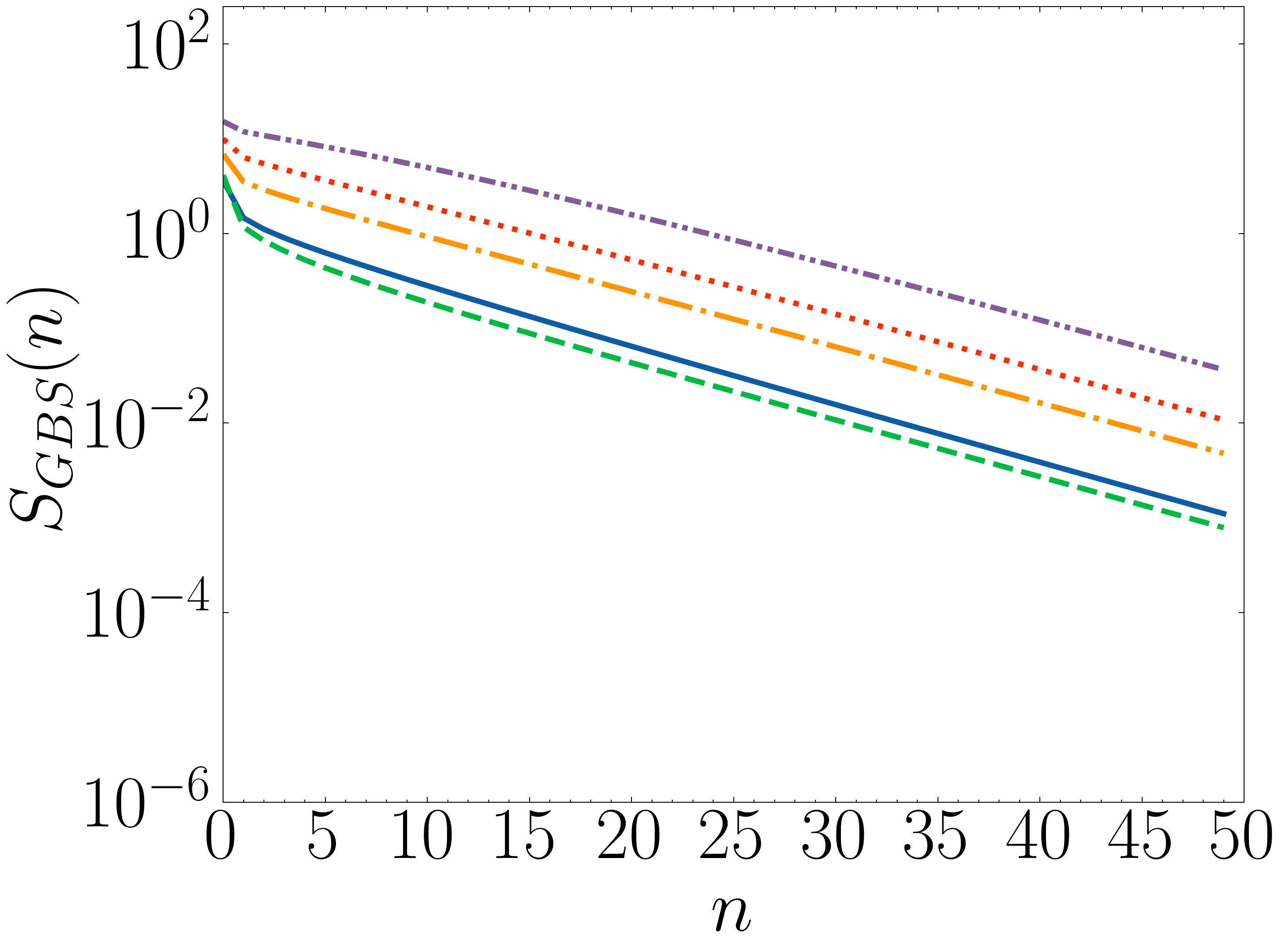}\\
\includegraphics[width=0.45\linewidth]{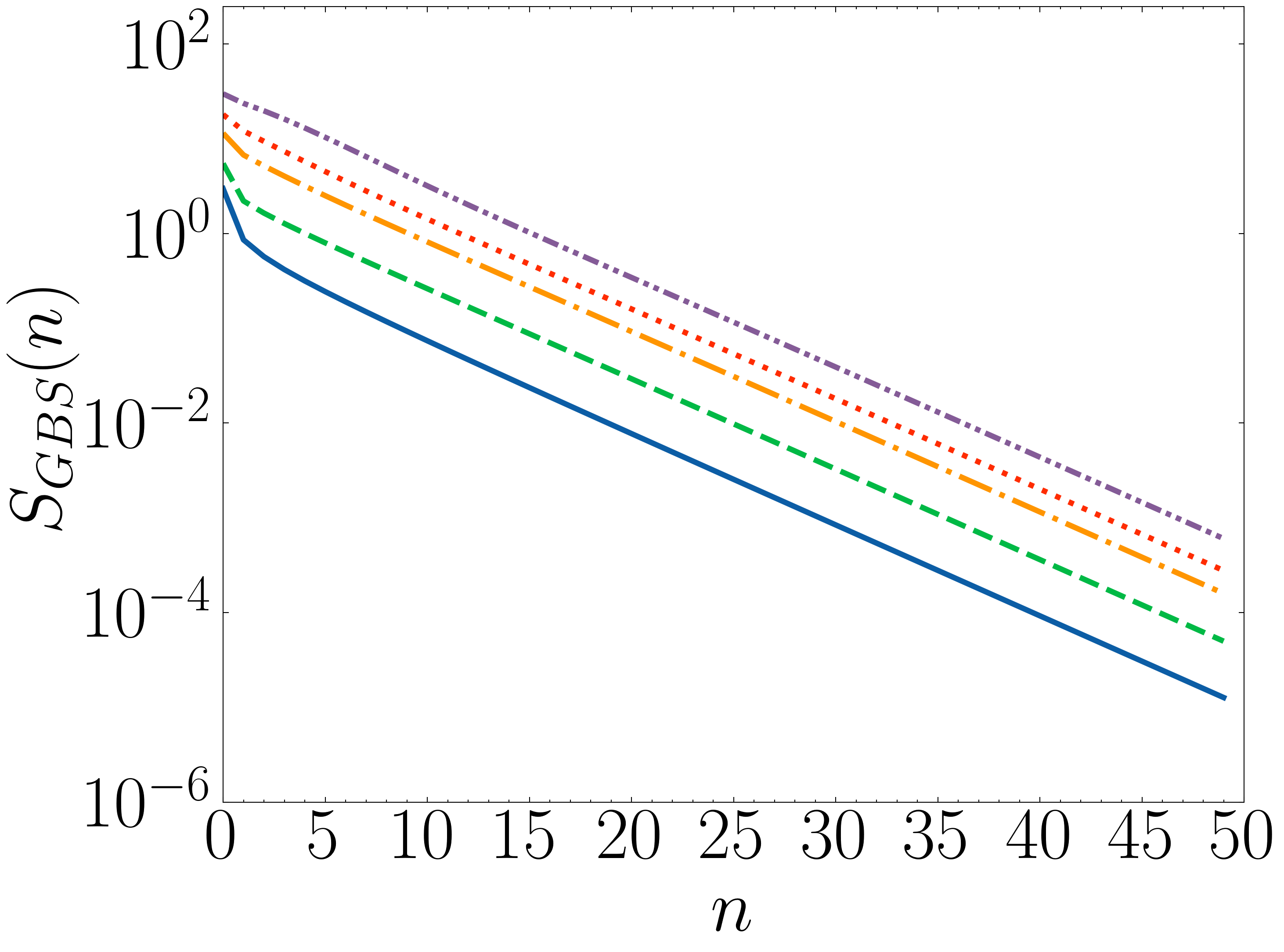}
\includegraphics[width=0.45\linewidth]{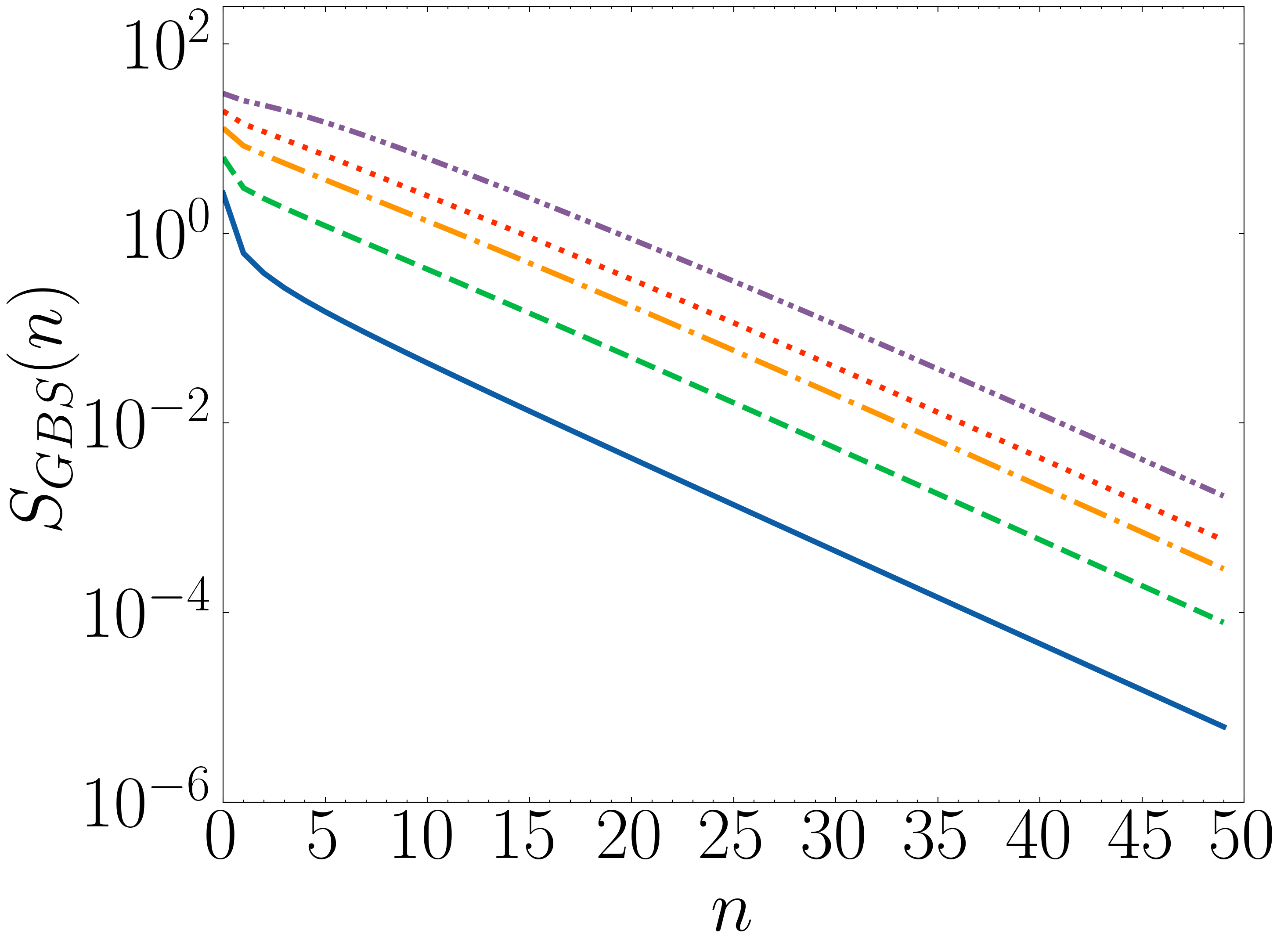}
\caption[x]{Entropy decay for the Wright-Fisher process with population $N=100$. Blue (solid), Green (dashed), Yellow (dash-dotted), Red (dotted) and Purple (dash-double-dotted) represent the initial fraction of type $\mathbb{A}$ individuals with values 0.05, 0.3, 0.5, 0.65 and 0.85, respectively. Type selection probabilities are given by formula~\eqref{eq:tsp_fitness}, with $\Psi^{(\A,\B)}$ given by game-theory: Upper left: Neutral case, with payoff matrix given by $\left(\begin{smallmatrix}1&1\\1&1\end{smallmatrix}\right)$; upper right: Dominance, $\left(\begin{smallmatrix}1&1\\1&1\end{smallmatrix}\right)+\frac{1}{N}\left(\begin{smallmatrix}0.1&0.3\\0&0\end{smallmatrix}\right)$; lower left: Coexistence $\left(\begin{smallmatrix}1&1\\1&1\end{smallmatrix}\right)+\frac{1}{N}\left(\begin{smallmatrix}0&0.3\\0.1&0\end{smallmatrix}\right)$; lower right: Coordination $\left(\begin{smallmatrix}1&1\\1&1\end{smallmatrix}\right)+\frac{1}{N}\left(\begin{smallmatrix}0.3&0\\0&0.1\end{smallmatrix}\right)$. In all cases, the leading and subleading eigenvalues of the core matrix are close to 1. }
\label{fig:wf}
 \end{figure}

%

\subsection{The minimum entropy}
\label{ssec:eigenvector}
The plots in Figure~\ref{fig:wf} bring an additional insight: the entropy curves for different initial conditions given by pure states do not cross as the system evolves through generations. This turns out to have unexpected consequences.

Indeed,  for an initial condition given by a pure state $\bp^\ini=\be_k$ and BGS entropy $E(\be_k)=-\log(v^{(1)}_ku^{(1)}_k)$. The minimum initial entropy is attained by a state supported in $\mathop{\mathrm{arg\, max}}_kv^{(1)}_ku^{(1)}_k$ --- when this set is a singleton, then the entropy minimizer distribution will be given by a pure state. In other words, the pure states corresponding to the maximum of the stationary distribution of the associated $Q$ process are the ones that minimise initial entropy in the system, provided such maximum is unique.

On the other hand, according to Lemma~\ref{lem:asymptotic_decay},  the entropy will be minimized when $n\to\infty$ at the minimum of $\langle \bu^{(2)},\bp\rangle^2=\left(u^{(2)}_k\right)^2$ (assuming $\lambda_2\ne0$, which is the generic situation). Considering an initial pure state, this will be achieved at $k$ such that $u^{(2)}_k$ is closest to zero.
	
Based on the numerical evidence provided by Figure~\ref{fig:wf}, the two observations above are linked and imply a relationship between these two conditions. We were unable to to provide a prove of this relationship, however Figure~\ref{fig:abplane} provides significant numerical evidence for this result. In particular, this suggest\added[id=facc]{s} that properties of associated entropy along the course of evolution may also bring further theoretical insights.

\begin{figure}
    \centering
    \includegraphics[width=\linewidth]{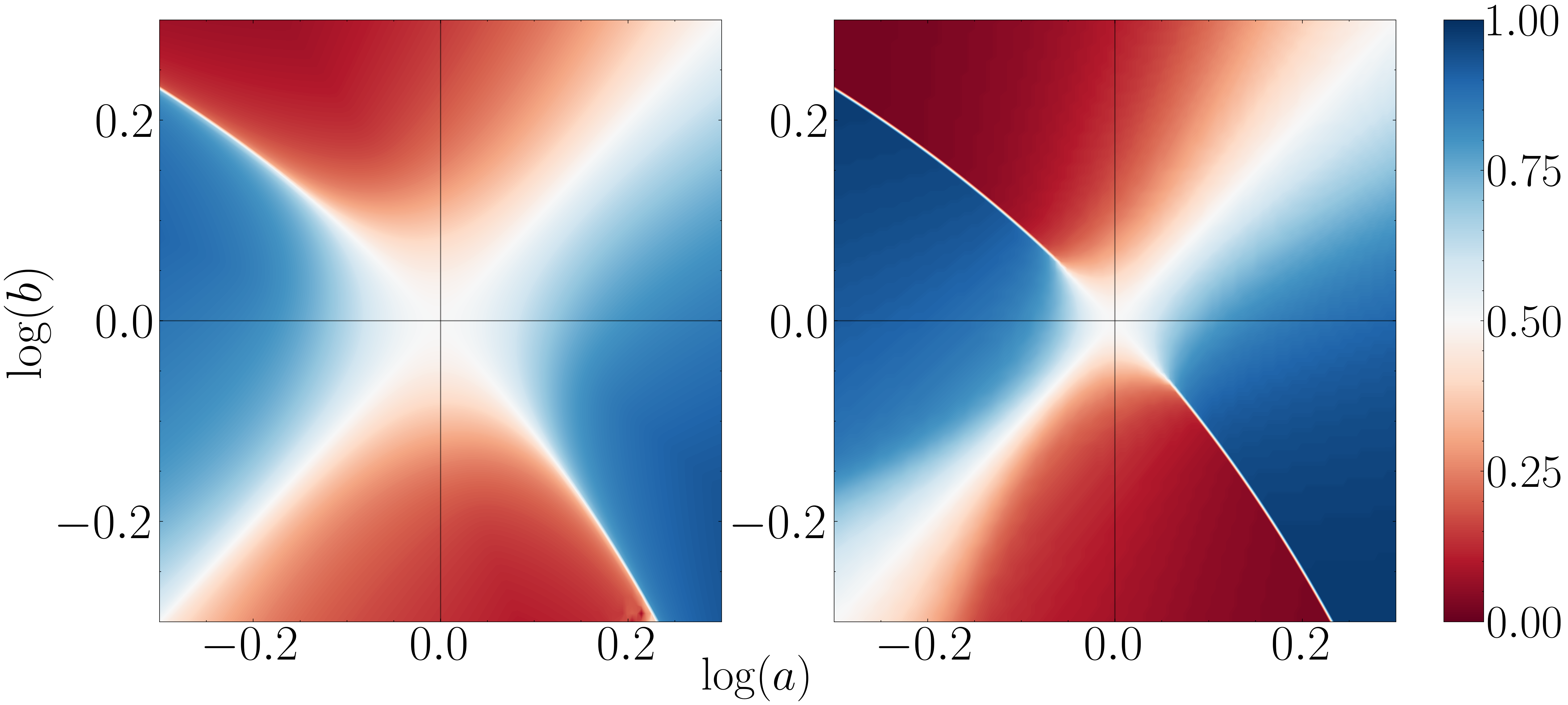}
    \caption[y]{Consider the Moran process with type selection probability vector $\tsp$ given by~\eqref{eq:tsp_fitness} and $\Psi^{(\A,\B)}$ given by game theory, with game matrix $\left(\begin{smallmatrix}1&a\\b&1\end{smallmatrix}\right)$. Left:  minimum entropy in the long run for a pure initial condition $\be_k$, $k=0,\dots,N$, given by  $\mathop{\mathrm{arg\,min}}_k|u^{(2)}_k|$, cf. equation~\eqref{eq:asymptotic_decay}. Right:  minimum BGS entropy for pure states $\be_k$, $k=0,\dots,N$ given by $\mathop{\mathrm{arg\, max}}_kv^{(1)}_ku^{(1)}_k$, cf. equation~\eqref{def:entropy}. Both values given as function of $\log a$ (horizontal axis) and $\log b$ (vertical axis), with $a,b\ge 0$. In both cases, the first quadrant represents coexistence games, the third, coordination games, and the second and fourth, domination games.}
    \label{fig:abplane}
\end{figure}

\subsection{Fundamental symmetries}
\label{sec:symmetries}

In fundamental physics, there are three fundamental discrete symmetries: time reversal symmetry (T-symmetry), charge conjugation symmetry (C-symmetry) and parity symmetry (P-symmetry) (see, e.g.,~\cite{hatfield1998quantum,sakurai2006advanced} and references therein). While there is no clear translation of these concepts within the framework of population genetics, we can use them as inspiration to explore fundamental symmetries in population genetics models. In this vein, we introduce the following definition:

 Let $\Phi_t(\chi;\mathbb{F},\mathbf{\Psi})$ the time advancing map of a given model in population genetics, where $\chi$ is the initial state, $\bF\bydef (x,1-x)$ is the vector of population state, and $\mathbf{\Psi}$ is the corresponding vector of fitnesses functions.  In the specific case of populations with only two types with constant population --- the case being considered in this work --- only $\mathbb{Z}_2$ symmetries are possible, and we use superscripts for the corresponding involution --- i.e $[(x,1-x)]^{\mathrm{p}}\bydef(1-x,x)$ and $[(\psi_1,\psi_2)]^{\mathrm{c}}\bydef(\psi_2,\psi_1)$. Moreover, when we change the vector of focal types, we assume the initial state is modified accordingly --- and we write $\chi^{\mathrm{p}}$ for the corresponding modification.  With this notation, we introduce the following definitions:

\begin{enumerate}
 \item \textbf{``T-symmetry'' in physics; time symmetry in population dynamics}: $\Phi_t(\chi;\mathbb{F},\mathbf{\Psi})=\Phi_{-t}(\chi;\mathbb{F},\mathbf{\Psi})$;
 \item \textbf{``P-symmetry'' in physics; type (or \emph{population}) symmetry in population dynamics}: $\Phi_t(\chi;\mathbb{F},\mathbf{\Psi})=\Phi_t(\chi^{\mathrm{p}};\mathbb{F}^{\mathrm{p}},\mathbf{\Psi})$;
 \item \textbf{``C-symmetry'' in physics; fitness (or \emph{choice}) symmetry in population dynamics}: $\Phi_t(\chi;\mathbb{F},\mathbf{\Psi})=\Phi_t(\chi;\mathbb{F},\mathbf{\Psi}^{\mathrm{c}})$.
\end{enumerate}

It is clear that in the neutral case, all models possess C- and P-symmetries.

We now look at each model and discuss the corresponding symmetries:

\paragraph{Kimura} Kimura model is given by  
\[
\partial_tp=\frac{\kappa}{2}\partial_x^2\left(x(1-x)p\right)-\partial_x\left(x(1-x)\Delta\psi(x)p\right), \quad p(x,0)=p_0(x)\ ,
\]
where $\Delta\psi=\psi^{\A}-\psi^{\B}$. Being a parabolic equation, the Kimura model does not have T- symmetry. The effect of a parity transformation is to change the sign of the first order term (in $x$) in the equation, and that is also the effect of a C-symmetry, provided $\psi_1\not=\psi_2$. In this case $\Delta\psi^{\textrm{cp}}(x)=-\Delta\psi(1-x)$, and therefore $p(t,1-x)$ is a solution of the CP-symmetic model if and only if $p(t,x)$ is a solution of the original model. Hence, this model possesses CP- symmetry.

\paragraph{Moran} Note that the C-symmetry transformation implies that the type selection vector changes as following
\[
\tsp_i^{\textrm{c}}\bydef\frac{i\Psi^{(B)}(i)}{i\Psi^{(\B)}(i)+(N-i)\Psi^{(\A)}(i)}=1-s_{N-i}\ ,
\]
and, therefore $M^{\textrm{c}}_{ij}=M_{N-i,N-j}$. We conclude that if $\bp$ is a solution of the Moran process, $\bp^{\textrm{c}}\bydef\left(p_{N-i}\right)_{i=0,\dots,N}$ is a solution of the C-Symmetric Moran process. However, $\bp^{\textrm{c}}=\bp^\rp$, where $\bp^\rp$ is the P-symmetric state vector. This shows that Moran also possesses CP- symmetry.

\paragraph{Replicator (PDE version)} This model is given by 
\[
\partial_tp+\partial_x\left(x(1-x)\Delta\psi(x)p\right)=0,\quad p(x,0)=p_0(x).
\]
For this model, T-symmetry changes the sign of the time derivative and C-symmetry changes the sign of the first order $x$-derivative; therefore the model possesses TC-symmetry. Furthermore, P-symmetry implies the changes $\Psi^{\mathrm{p}}(x)=\Psi(1-x)$, $p^{\mathrm{p}}(t,x)=p(t,1-x)$, $\partial_x^{\mathrm{p}}=-\partial_x$, and hence, any pair of transformation leaves the corresponding equation invariant.

For the ODE version of the Replicator equation $\dot{X}=X(1-X)\Delta\Psi(X)$, $X(0)=X_0$, the same symmetries of the PDE version of the Replicator equation are valid.

Table~\ref{tab:sym} summarize all cases discussed above.

\begin{table}
	\centering
	\begin{tabular}{|l|c|c|c|c|c|c|c|}
		\hline
		Model& C-&P-&T-&CP-&CT-&PT-&CPT-\\\hline\hline
		Kimura& -& -&-&X&-&-&-\\\hline
		Moran&-&-&-&X&-&-&-\\\hline
		Replicator (PDE \& ODE)&-&-&-&X&X&X&-\\\hline
	\end{tabular}
\caption{ Symmetries for the various models in the generic case. \label{tab:sym}}
\end{table}

\section{Conclusisons and biological implications}

This work investigates the reversible and  irreversible features of a class of absorbing processes that are ubiquitous in population genetics. In particular, it explores the application of a family of entropies discussed in \cite{CMRS:21} as a  tool to characterise these features. This family turns out to be  relative entropies of the associated $Q$ process -- a process that was instrumental in the derivation of the gradient structure of entropy minimisation in \cite{CMRS:21}. In this vein, it is an attempt to  provide the first steps towards a mathematical foundation of entropy and second law in biology, with an emphasis on reducible processes --- for other approaches to entropy in evolution and biology see \cite{brooks1988evolution} and  the review by \cite{roach2020use}; see also \cite{lieb1999physics} for a mathematical formulation of entropy and second law in physics and \cite{Zhu2020} for an application on bacterial resistance.

As a preliminary result, Lemma~\ref{lem:prob_paths} already suggests the possibility of a  macroscopic second law acting at population level and it is also seems to be  compatible with path entropies --- see \cite{roach2020use} for a critique on the second law approach while  supporting  the use of path entropies.

Lemma~\ref{lem:decay} provides a general proof of decaying of this family of entropies, which include the BGS and Tsallis entropies as special cases. Long term decay rates were also obtained in Lemma~\ref{lem:asymptotic_decay} and the results in  Section~\ref{ssec:eigenvector} are a byproduct of the inherent  features this family of entropies display and these decaying asymptotic rates.

The results in Section~\ref{ssec:multiloci} discuss an adaptation of the results for a multiloci framework that suggest the information correlation between different loci is linked to the entropy associated to the system. In this sense, Tsallis entropies should be the appropriate entropy for subadditive systems, and this raises the possibility of characterisation of epistatic systems using $m$-Tsallis entropies, where $m$ should measure the degree of correlation between different \emph{loci}.

At a more speculative level, we single out two questions that we believe are amenable to be addressed  by the methods developed here:

\begin{enumerate}
	\item When the Shannon entropy is appropriately rescaled, it yelds a metric of diversity called eveness~\cite{jost2010relation}. Use of similar metrics using the entropies studied here might yield some insight in measuring the ability of a population to develop resistance to control methods --- for instance a bacterial population to develop antibiotic resistance. Indeed, given certain macroscopic features of the population, we should not only estimate the average properties at the individual level, but also we should estimate how precise are our measurements vis-a-vis the true state of the population. These diversity indicators might provide an estimate on the number of microstates that are compatible with a measured one, and thus to quantify the uncertainty in the outcome of a human intervention on this population,
	\item  Another interesting possibility is to understand sudden changes of the macrostate of the population. It is known~\cite{ChalubSouza:2017a} that the Moran process is incompatible with jumps in the evolution, but for multi-agent interactions, it is possible to have discontinuities in the Wright-Fisher evolution. This is possibly related to a central discussion in evolutionary biology, i.e., whether evolution is gradual and slow or essentially  composed by fast changes and long stasis periods. This points out  in the direction of expanding results from the present work to multi-player game theory, without assuming weak selection.
\end{enumerate}

\section*{Ackowledgements}

DCC thanks the support by the Programme New Talents in Quantum Technologies of the Gulbenkian Foundation (Portugal). FACCC was partially funded by project UID/MAT/00297/2019. FACC also acknowledge some discussion with Armando Neves (UFMG, Brazil). MOS  was partially financed by Coordena\c{c}\~ao de Aperfei\c{c}oamento de Pessoal de N\'ivel Superior - Brasil (CAPES) - Finance code 001 and by the CAPES PRINT program at UFF --- grant \# 88881.310210/2018-01. MOS was also partially financed by CNPq (grant \# 310293/2018-9) and by FAPERJ (grant \# E-26/210.440/2019). We also thank an anonymous referee and the managing editor for many valuable comments that helped us to improve the paper.
 

\end{document}